\documentclass[11pt,letterpaper]{article}

\usepackage[shortcuts]{extdash}

\title{The Effect of Sparsity on $k$-Dominating Set\\ and Related First-Order Graph Properties}

\author{Nick Fischer \\ Weizmann Institute of Science \thanks{This work is part of the project CONJEXITY that has received funding from the European Research Council (ERC) under the European Union's Horizon Europe research and innovation programme (grant agreement No.~101078482).}
\and Marvin Künnemann \\ Karlsruhe Institute of Technology \thanks{Research supported by the Deutsche Forschungsgemeinschaft (DFG, German 
Research
Foundation) – 462679611.} \and Mirza Redzic \\ Karlsruhe Institute of Technology \footnotemark[2]}
\date{}

\usepackage[mathscr]{eucal}
\usepackage[fleqn]{amsmath}
\usepackage{amssymb,bbm,amsthm}
\usepackage{thm-restate,thmtools}
\usepackage{subcaption}
\usepackage[utf8]{inputenc}
\usepackage{graphicx}
\usepackage{hyperref}

\newcommand{\bigO}{\mathcal{O}}

\usepackage{comment,color}
\usepackage{mathtools}

\usepackage{amsthm}
\usepackage{float}
\usepackage{cancel}
\usepackage[linesnumbered,boxed]{algorithm2e}
\usepackage[margin=1in, papersize={8.5in,11in}]{geometry}

\usepackage{enumitem}

\newcommand{\F}{\mathbb{F}}

\SetCommentSty{mycommfont}

\declaretheorem[numberwithin=section]{theorem}
\declaretheorem[unnumbered, name=Theorem]{theorem*}
\declaretheorem[numberlike=theorem]{lemma}

\declaretheorem[numberlike=theorem]{proposition}

\declaretheorem[numberlike=theorem]{corollary}

\declaretheorem[numberlike=theorem]{claim}

\declaretheorem[numberlike=theorem]{observation}

\declaretheorem[numberlike=theorem, name=Definition]{definition}
\declaretheorem[unnumbered, name=Definition]{definition*}
\declaretheorem[numberlike=theorem, name=Conjecture]{conjecture}
\declaretheorem[unnumbered, name=Conjecture]{conjecture*}

\declaretheorem[unnumbered, name=Hypothesis]{hypothesis*}

\newcommand{\tOh}{\tilde{\mathcal{O}}}

\newcommand{\FOP}[1]{$\mathrm{FOP}_k$}

\makeatletter
\renewcommand\paragraph{%
  \@startsection{paragraph}
    {4}
    {\z@}
    {3.25ex \@plus1ex \@minus.2ex}
    {-1em}
    {\normalfont\normalsize\bfseries\addperiod}}
\newcommand{\addperiod}[1]{#1\@addpunct{.}}

\DeclareMathOperator\MM{MM}
\def\Order{\bigO}

\begin{document}

\maketitle

\begin{abstract}
\noindent
We revisit the classic $k$-Dominating Set problem. Besides its importance as perhaps the most natural $W[2]$-complete problem, it is among the first problems for which a tight $n^{k-o(1)}$ conditional lower bound (for all sufficiently large $k$), based on the Strong Exponential Time Hypothesis (SETH), was shown (Pătraşcu and Williams, SODA 2007). Notably, however, the underlying reduction creates dense graphs, raising the question: how much does the sparsity of the graph affect its fine-grained complexity?

As our first result, we settle the fine-grained complexity of $k$-Dominating Set in terms of both the number of nodes $n$ and number of edges $m$, up to resolving the matrix multiplication exponent $\omega$. Specifically, on the hardness side, we show an~\makebox{$mn^{k-2-o(1)}$} lower bound based on SETH, for any dependence of $m$ on $n$. On the algorithmic side, this is complemented by an~\makebox{$mn^{k-2+o(1)}$}-time algorithm for all sufficiently large $k$. For the smallest non-trivial case of~\makebox{$k=2$}, i.e., 2-Dominating Set, we give a randomized algorithm that employs a Bloom-filter inspired hashing to improve the state of the art of $n^{\omega+o(1)}$ to $m^{\omega/2+o(1)}=O(m^{1.187})$. If $\omega=2$, this yields a conditionally tight bound for all $k\ge 2$.

To study whether $k$-Dominating Set is special in its sensitivity to sparsity, we study the effect of sparsity on very related problems:
\begin{itemize}
	\item The $k$-Dominating Set problem belongs to a type of first-order definable graph properties that we call \emph{monochromatic basic problems}. These problems are the canonical monochromatic variants of the basic problems that were proven complete for the class FOP of first-order definable properties (Gao, Impagliazzo, Kolokolova, and Williams, TALG 2019). We show that among the  monochromatic basic problems, the $k$-Dominating Set property is the \emph{only} property whose fine-grained complexity decreases in sparse graphs. Only for the special case of reflexive properties is there an additional basic problem that can be solved faster than $n^{k\pm o(1)}$ on sparse graphs.
	\item For the natural variant of distance-$r$ $k$-dominating set, we obtain a hardness of $n^{k-o(1)}$ under SETH for every $r\ge 2$ already on sparse graphs, which is tight for sufficiently large~$k$.
\end{itemize}

\end{abstract}

\setcounter{page}{0}
\thispagestyle{empty}
\newpage

\section{Introduction}
\label{sec:intro}

Consider an algorithmic graph problem whose best known algorithm runs in time $\tOh(n^c)$, where~$n$ denotes the number of vertices and $c\ge 2$ is some (usually small) constant. While improving this running time might resist intensive effort and even suffer from a conditional lower bound, it might still be possible to obtain substantial, \emph{polynomial-factor} improvements by taking into account the sparsity of the given graph -- after all, far from all interesting graphs are dense. A typical target to shoot for is a time bound of $\tOh(m^{c/2})$, where $m$ denotes the number of edges in the graph, or \emph{(edge) sparsity}. Such a running time is never worse than the known bound for dense graphs, i.e., $m=\Theta(n^2)$, but polynomially improves the running time for all sparser graphs, i.e., when~\makebox{$m=\bigO(n^{2-\epsilon})$}.

On the algorithmic side, obtaining such an algorithm may be technically challenging (consider, e.g., the global min-cut problem~\cite{KargerS96, Karger00}) or even turn out to be conditionally impossible: E.g., turning the $n^{\omega+o(1)}$-time state-of-the-art algorithm for All-Edge Triangle Detection to an $m^{\omega/2+o(1)}$-time algorithm would refute the 3SUM and APSP hypotheses~\cite{Patrascu10,VassilevskaWX20}.

While most works on the fine-grained complexity of graph problems analyze the time complexity either purely in $n$ (we call this the \emph{dense} case) or purely in $m$ (we call this the \emph{sparse} case), some recent works even specifically address the full trade-off between $m$ and $n$, e.g.~\cite{AgarwalR18, LincolnVWW18}: Agarwal and Ramachandran~\cite{AgarwalR18} give several sparsity-preserving reductions from the shortest cycle problem, giving evidence of optimality of natural $\tOh(mn)$-time algorithms. Subsequently, Lincoln et al.~\cite{LincolnVWW18} even manage to prove that the weighted $k$-clique hypothesis implies optimality of the $\tOh(mn)$-time bound for shortest cycle, specifically for $m=\Theta(n^\gamma)$ for infinitely many $1\le \gamma \le 2$. This yields conditional optimality for several APSP-related problems, such as directed and undirected APSP, radius, replacement paths, and more. Further related work addresses, e.g., the influence of sparsity for (unweighted) $k$-cycle detection~\cite{DahlgaardKS17,LincolnV20}.

In this paper we aim to advance this line of research by settling the effect of sparsity on interesting graph properties, most notably the $k$-Dominating Set problem.

\subsection{The Effect of Sparsity on \texorpdfstring{\boldmath$k$}{k}-Dominating Set}
\label{sec:effect-kdomSet}

We revisit the central graph problem $k$-Dominating Set for $k\ge 2$: Given an undirected graph $G=(V,E)$, determine whether there is a $k$-sized set $S$ of vertices such that each vertex $v\in V$ is \emph{dominated} by $S$ (i.e., $v\in S$ or there exists $u\in S$ with $\{u,v\}\in E$). It is among the classic NP-hard problems, counts as perhaps \emph{the} most natural $W[2]$-complete problem~\cite{DowneyF95} (see also~\cite{DowneyF13, CyganFKLMPPS15}) and suffers from strong fine-grained inapproximability results, see, e.g.~\cite{ChalermsookCKLM20,SLM19}. It gives rise to the notion of domination number in graph theory and has inspired a plethora of related problems (e.g., edge domination, total domination, partial domination, connected domination, capacitated domination and many more); see, e.g.,~\cite{HaynesHS98} for a dedicated monograph.

In the dense setting, the fine-grained complexity of $k$-Dominating Set is well understood (up to resolution of fast matrix multiplication):
Eisenbrand and Grandoni~\cite{EisenbrandG04} show how to solve $k$-Dominating Set in time $n^{k+o(1)}$ for all $k\ge 8$. For $2\le k \le 7$, the (small) polynomial overhead to the $\bigO(n^k)$ running time depends on the complexity of fast rectangular matrix multiplication. In particular, if $\omega=2$, $k$-Dominating Set can be solved in time $n^{k+o(1)}$ for all $k\ge 2$.

On the lower bound side, Pătraşcu and Williams~\cite{PatrascuW10} show that an $\bigO(n^{k-\epsilon})$ algorithm for any~\makebox{$k\ge 3$} would refute the Strong Exponential Time Hypothesis (SETH).\footnote{In fact, the lower bound can be based on the $k$-OV hypothesis, see Section~\ref{sec:prelims} for a definition.} Notably, this reduction creates dense graphs, and in particular does not give any lower bound for the case of $k=2$. We thus ask:

\begin{center}
\emph{Question 1: How does sparsity affect the time complexity of $k$-Dominating Set?
}    
\end{center}

Perhaps surprisingly, it has been observed in~\cite[Footnote 5]{BringmannFK19} that $k$-Dominating Set indeed admits faster than $n^{k\pm o(1)}$ algorithms in sparse graphs. The idea is simple: Any dominating set of size $k$ must contain at least one member that dominates at least $n/k$ vertices. Combining this idea with the algorithm of Eisenbrand and Grandoni~\cite{EisenbrandG04}, we obtain the following baseline, which depends on the optimal exponent $\omega(a,b,c)$ of multiplying a $n^a\times n^b$ matrix with a $n^b\times n^c$ matrix (and $\omega=\omega(1,1,1)$).

\begin{restatable}[$k$-Dominating Set Baseline]{proposition}{propbaseline} \label{prop:k-DomBaseline}
Let $k\ge 2$ and $1\le \gamma \le 2$. The $k$-Dominating Set on graphs with~$n$ nodes and $m=\Theta(n^\gamma)$ edges can be solved in time \smash{$n^{\omega+o(1)} + n^{\omega(\lceil \frac{k-1}{2} \rceil, 1, \lfloor \frac{k-1}{2} \rfloor + \gamma - 1) + o(1)}$}. If~\makebox{$\omega=2$}, this running time becomes $n^{2+o(1)} + mn^{k-2+o(1)}$.
\end{restatable}

For $k\ge 3$ (and assuming that $\omega=2$), this improves significantly over the running time for the dense case by a factor of $\Theta(n^2/m)$. However, under Pătraşcu and Williams' lower bound~\cite{PatrascuW10}, we could hope for even better speed-ups -- possibly even for an algorithm running in time $m^{k/2+o(1)}$. As our first contribution, we show that the $\Theta(n^2/m)$-improvement by Proposition~\ref{prop:k-DomBaseline} is best-possible, assuming the $k$-Orthogonal Vectors Hypothesis, which is well-known to be implied by the Strong Exponential Time Hypothesis (see Section~\ref{sec:prelims}).


\begin{restatable}[$k$-Dominating Set Lower Bound]{theorem}{thmdomsetlowerbound} \label{theorem:DomSet-lower-bound}
For all $k\ge 3$ and $\epsilon > 0$, there is no algorithm for $k$-Dominating Set in time $\bigO(mn^{k-2-\epsilon})$, unless the $k$-OV Hypothesis fails.
\end{restatable}

Curiously, Theorem~\ref{theorem:DomSet-lower-bound} leaves open an important special case, as it provides no non-trivial lower bound for $k=2$. At the same time, Proposition~\ref{prop:k-DomBaseline} also does not improve over the best known upper bound of $n^{\omega+o(1)}$ due to Eisenbrand and Grandoni~\cite{EisenbrandG04}. This brings us to an unclear situation: Can we improve the $n^{\omega+o(1)}$-time algorithm for 2-Dominating Set on sparse graphs, or can the lower bound from Theorem~\ref{theorem:DomSet-lower-bound} be strengthened?

Our main algorithmic result is that for 2-Dominating Set we indeed can improve upon Proposition~\ref{prop:k-DomBaseline}:

\begin{restatable}[$2$-Dominating Set Algorithm]{theorem}{thmtwodomsetalgo} \label{theorem:DomSet-algorithm}
    There is a randomized algorithm solving 2-Dominating Set in time $m^{\omega/2+o(1)}$.
\end{restatable}
Note that if $\omega = 2$, this yields an almost-optimal $m^{1+o(1)}$-time algorithm for 2-Dominating Set. More generally, if $\omega=2$, our results conditionally establish that $mn^{k-2+o(1)}$ is the \emph{optimal} running time for $k$-Dominating Set for all $k\ge 2$, up to subpolynomial factors.

We remark that if $\omega > 2$, our algorithm achieves an even better running time for very small graph densities (\smash{$m \leq n^{\frac{2+\omega}{4}} \le n^{1.094}$}). Specifically, our running time is never worse than $\tOh(m^2/n)$ and thus near-linear in for very sparse graphs with $m = \tOh(n)$. See Section~\ref{sec:k-dom-set} for more details.

\subsection{Beyond \texorpdfstring{\boldmath$k$}{k}-Dominating Set: Monochromatic First-Order Graph Properties}

The non-trivial influence of sparsity on the complexity of $k$-Dominating Set raises the question how general this phenomenon is:

\begin{center}
	\emph{Question 2: For which related graph problems does sparsity influence the time complexity?}
\end{center}

To approach this question systematically, we observe that $k$-Dominating Set is a first-order definable property of the following form: Given an undirected graph $G=(V,E)$ with $|V|=n$ and $|E|=m$, decide if
\[ \exists v_1\in V \dots \exists v_k \in V: \forall w \in V: E(v_1,w) \vee \cdots \vee E(v_k,w).\]
(In this formulation, we assume that the edge predicate is symmetric and reflexive, i.e., $E(v,v)$ for all $v\in V$.)

There are many interesting problems that may be formulated as such a first-order definable graph property, such as existence of a given $k$-vertex pattern, existence of a $k$-sized set of vertices sharing no common neighbor, the property of having a graph diameter 2, and many more. By allowing an even more general formulation\footnote{Specifically, allowing an arbitrary number of relations of arbitrary constant arity, not just a single edge relation.}, one arrives at the class FOP defined by Gao, Impagliazzo, Kolokolova and Williams~\cite{GaoIKW19}: For any first-order definable property $\phi$, FOP contains the corresponding problem of deciding $\phi$ over a given relational structure (in many interesting cases, simply a graph).

General algorithmic results for this class have been obtained by Williams~\cite{Williams14} for graph properties in the dense case (where we consider the universe size $n$ as main parameter) and by~\cite{GaoIKW19} for general properties in the sparse case (where we consider the total size of the relational structure $m$ as main parameter). 
Specifically, Williams showed that all $(k+1)$-quantifier first-order graph properties can be solved in time $\bigO(n^{k+o(1)})$ for $k\ge 8$ (which would even hold for all $k\ge 2$ if $\omega=2$) and additionally gave a SETH-based lower bound of $n^{k-o(1)}$ for some properties. Gao et al.~\cite{GaoIKW19} show that all $(k+1)$-quantifier properties can be solved in time $\bigO(m^k)$, and for each $k\ge 2$, determine a list of $(k+1)$-quantifier problems, called \emph{basic problems} of order $k$, to be \emph{complete} for this class in the following sense: An $\bigO(m^{k-\epsilon})$-time algorithm for any of these complete problems would give a polynomial improvement over the $\bigO(m^{k'})$-time algorithm for all problems with $k'+1$ quantifiers where $k'\ge k$. We shall call a problem that is complete in this sense an \emph{\FOP{k}-complete} problem. For any $k\ge 2$, the basic problems of order $k$ have the following form:
\begin{align*}
&\exists v_1 \in V_1 \dots \exists v_k \in V_k \forall w\in W: \ell_1  \vee \cdots \vee \ell_k,\\
&\text{where each literal } \ell_i \text{ is of the form } E(v_i,w) \text{ or } \overline{E(v_i,w)}
\end{align*}
Put differently, to obtain a basic problem, one must choose, for each $i\in [k]$, precisely one of $E(v_i,w)$ and its negation $\overline{E(v_i, w)}$. Note that this establishes the basic problems as fine-grained equivalent, \emph{hardest} problems in FOP. Among these basic problems, we find the $k$-OV problem (see also~\cite{GaoIKW19} for a detailed discussion), and a problem that is usually not formulated in graph language: $k$-Set Cover, see below.

\paragraph{Monochromatic vs.~Bichromatic: \texorpdfstring{\boldmath$k$}{k}-Dominating Set vs.~\texorpdfstring{\boldmath$k$}{k}-Set Cover} In the  $k$-Set Cover problem, the input consists of a set family $\mathcal{S}$ over universe $U$, and the question is whether there are $k$ sets from $\mathcal{S}$ that cover $U$, i.e.,
\[ \exists S_1\in {\cal S} \dots \exists S_k \in {\cal S}: \forall u  \in U: u \in S_1 \vee \cdots \vee u \in S_k.\]
By introducing a set $S_v \coloneqq N[v]$ for all $v\in V$, this problem generalizes the $k$-Dominating Set problem. In fact, $k$-Set Cover can be equivalently viewed as a \emph{bichromatic} version of $k$-Dominating Set (also known as Red-Blue Dominating Set): Define a 2-partite graph $G=({\cal S}\cup U, E)$ where for any $S_i \in \mathcal{S}, u\in U$, we have $\{S_i,u\}\in E$ if and only if $u\in S_i$. Then the task is to determine a set of $k$ vertices $S_1,\dots, S_k$ chosen from $\mathcal{S}$ such that $S_1,\dots,S_k$ dominate all vertices in $U$.

Pătraşcu and Williams observe that their $n^{k-o(1)}$ conditional lower bounds for $k$-Dominating Set extends to $k$-Set Cover. In fact, it is not difficult to see (and implicit in~\cite{PatrascuW10}) that the reduction for $k$-Dominating Set can be slightly simplified to establish hardness of $k$-Set Cover already for 
all $k\ge 2$ (rather than $k\ge 3$). Moreover, the hardness reduction produces sparse instances for which $\sum_{i=1}^n |S_i| = \bigO(n)$. Thus, our results in Section~\ref{sec:effect-kdomSet} \emph{separate} $k$-Set Cover (the bichromatic variant) from $k$-Dominating Set (the monochromatic variant), as the effect of sparsity differs for both problems.
This leads to the natural question whether monochromatic versions are always easier to solve on sparse graphs than on general graphs.

\paragraph{Monochromatic Basic Problems.} To address this question in some generality, we perform a comprehensive study on the \emph{monochromatic basic problems}, i.e., the canonical \emph{monochromatic} versions of the basic problems of FOP. Our monochromatic basic problems have the form:
\begin{align*}
&\exists \text{ pairwise distinct } v_1,\dots, v_k \in V \forall w\in V: \ell_1  \vee \cdots \vee \ell_k,\\
&\text{where each literal } \ell_i \text{ is of the form } E(v_i,w) \text{ or } \overline{E(v_i,w)}
\end{align*}
Note that here we introduce as additional requirement that the existentially quantified variables $v_1,\dots, v_k$ are pairwise distinct. For monochromatic properties, this requirement is indeed the usually intended meaning -- otherwise, most of the basic problems become trivial.\footnote{Specifically, whenever the basic problem contains both a positive disjunct $E(v_i,v)$ and a negative disjunct $\overline{E(v_j,v)}$, the property is trivially satisfied by choosing $v_i = v_j$.} Our monochromatic basic graph problems (of order $k$) contain several natural examples:
\begin{itemize}[itemsep=\smallskipamount,parsep=0pt,partopsep=0pt]
	\item\emph{$k$-Dominating Set:} Is there a subset of $k$ vertices dominating all vertices?
	\item\emph{Neighborhood Containment:} Are there distinct $v_1,v_2\in V$ such that $N(v_1) \subseteq N(v_2)$?
	\item\emph{Neighborhood $(k-1)$-Covering:} Is there a vertex whose neighborhood can be covered by the neighbors of $k-1$ other vertices?
	\item\emph{$k$-Empty-Neighborhood-Intersection:} Are there $k$ vertices that have no common neighbor?
	\item\emph{$(k-1)$-Common Neighborhood:} Is there a vertex whose neighbors are common neighbors of $k-1$ other vertices?
\end{itemize}

Although the (multichromatic) basic problems are all fine-grained equivalent (by being \FOP{k}-complete), we show that among the monochromatic basic problems, $k$-Dominating Set is surprisingly different:
\begin{theorem}[Basic Problems Lower Bound] \label{thm:monochromatic-except-kdomSet}
	Let $k\ge 2$ and $\epsilon > 0$. If any of the monochromatic basic problems \emph{except $k$-Dominating Set} can be solved in time $\bigO(n^{k-\epsilon})$ on graphs with $n^{1+o(1)}$ edges, then the $k$-OV Hypothesis is false.
\end{theorem}

This establishes that $k$-Dominating Set is the \emph{only} monochromatic basic problem that becomes easier on sparse graphs, answering our driving Question 2.

Interestingly, this result also shows the fine-grained equivalence of almost all basic problems to their monochromatic versions -- proving such multichromatic-to-monochromatic reductions for simpler first-order properties, such as detection of certain patterns of size $k$, would contradict established hardness assumptions in fine-grained complexity theory. As a case in point, one can show that the 4-chromatic version of 4-cycle detection conditionally requires $n^{\omega-o(1)}$ time under the triangle detection hypothesis, while the monochromatic version of 4-cycle detection is solvable in time $\bigO(n^2)$~\cite{YusterZ94}.

\paragraph{Special Cases: Reflexivity vs.~Irreflexivity}
We highlight an additional reason why Theorem~\ref{thm:monochromatic-except-kdomSet} appears surprising, as it does \emph{not} hold for the special case of \emph{reflexive} properties. Specifically, our definition of monochromatic basic properties allows for the existence of self-loops, i.e., $E(v,v)$ with $v\in V$, to be specified individually for each $v\in V$, as part of the input. It may be reasonable to either disallow self-loops (i.e., require $\overline{E(v,v)}$ for all $v\in V$; we call this the \emph{irreflexive case}) or enforce self-loops (i.e., require $E(v,v)$ for all $v\in V$; we call this the \emph{reflexive} case). The reflexive special case generally expresses problems over closed neighborhoods and the irreflexive special case expresses problems over open neighborhoods.

Our proof of Theorem~\ref{thm:monochromatic-except-kdomSet} establishes the same hardness for sparse graphs in the \emph{irreflexive} case. For the \emph{reflexive} case, however, it turns out that there exists an additional basic problem for which $n^{k\pm o(1)}$ time can be broken for sparse graphs: Closed Neighborhood $(k-1)$-Covering. In this problem the task is to detect distinct $v_1,\dots, v_k$ vertices such that the closed neighborhood of $v_1$ is covered by the closed neighborhoods of $v_2,\dots, v_k$, i.e., $N[v_1]\subseteq N[v_2]\cup \cdots \cup N[v_k]$. We design an algorithm running in time $m^{\frac{2\omega}{\omega +1}}n^{k-2} \le \bigO(m^{1.41}n^{k-2})$. This beats running time $n^{k\pm o(1)}$ whenever $m \leq n^{1+\frac{1}{\omega}-o(1)}$ (the exponent is roughly $1.42$ under the current value of $\omega$).
To our surprise, we further establish that in the reflexive case, $k$-Dominating Set and Closed Neighborhood $(k-1)$-Covering are the \emph{only} monochromatic basic problems that are influenced by sparsity.

\paragraph{Distance-\texorpdfstring{\boldmath$r$}{r} Domination}
A popular generalization of $k$-Dominating Set is the Distance-$r$ $k$-Dominating Set problem, see, e.g.,~\cite{EickmeyerGKKPRS17,KreutzerRSW17, BergBKMZ20}. In this variant, a vertex $v$ is dominated by $S\subseteq V$, if there exists some $s\in S$ with distance at most $r$ from $v$. In Section~\ref{sec:distance-r-domination}, we show that Distance-$r$ $k$-Dominating Set is affected by sparsity if and only if $r=1$: For $r=1$, we obtain the usual $k$-Dominating Set problem with the complexity $mn^{k-2}$ established in Section~\ref{sec:effect-kdomSet} if $\omega=2$. For any $r\ge 2$, we prove a hardness of $n^{k-o(1)}$ under SETH already in sparse graphs, which is tight for all $k\ge 2$ if $\omega=2$.

\paragraph{Related Work}
Investigating the fine-grained complexity of classes of first-order definable problems has recently gained traction, see, e.g.,~\cite{Williams14, GaoIKW19, BringmannFK19,BringmannCFK21,BringmannCFK22,AnGIJKP22}.

Establishing hardness results for monochromatic settings generally appears to be technically challenging: For additive problems, specifically 3-Linear Degeneracy Testing, \cite{DudekGS20} exploit involved constructions from additive combinatorics ($k$-sum-free sets) to establish the equivalence of monochromatic and multichromatic variants. Another example is the geometric setting of Closest Pair in the Euclidean Metric, for which a fine-grained equivalence between the bichromatic and monochromatic case could be shown~\cite{SM20}.

%




\subsection{Technical Overview}

We give an outline of our most interesting technical ideas. Specifically, we sketch the algorithmic improvements for $k$-Dominating Set for small values of $k$, as well as our general reduction from multichromatic basic properties except $k$-Dominating Set to their monochromatic versions. All further contributions are detailed in their respective technical sections.

\paragraph{Algorithmic Contributions for \boldmath$k$-Dominating Set}

To exploit sparsity for $k$-Dominating Set, the first crucial observation is that in any $k$-dominating set $\{v_1,\dots, v_k\}$ there must exist a node $v_i$ of degree at least $n/k-1$. Let $H$ denote the set of such nodes; we clearly have that $|H| \le \frac{2m}{(n/k - 1)} = \bigO(m/n)$. Thus, we may restrict our search for a dominating set to the search space $H \times V^{k-1}$ of size $\bigO(mn^{k-2})$.

However, naively testing each set in $H \times V^{k-1}$ still requires an overhead of $\bigO(n)$ per candidate solution. In this way we cannot beat running time $n^{k\pm o(1)}$, so we have to be more careful. It is only natural to try to adapt the approach of Eisenbrand and Grandoni~\cite{EisenbrandG04}. Let us consider the case of~\makebox{$k=2$}: For any sets $S, T\subseteq V$, we denote by $A_{S,T}$ the adjacency matrix of $G$ restricted to $S\times T$ (i.e., $A_{S,T}$ is the 0-1 matrix whose rows are indexed by $S$, whose columns are indexed by $T$, and whose entries are defined by $A_{S,T}[s,t] = 1$ iff $\{s,t\}\in E$ or $s=t$). Furthermore, let $\overline{A}$ denote the complement of $A$. Then it holds that
\[ (\overline{A_{H,V}} \cdot \overline{A_{V,V}})[v_1,v_2] = 0 \text{ iff } N[v_1]\cup N[v_2] = V,\]
 i.e., $v_1,v_2$ form a 2-dominating set. This reduces 2-dominating set to the multiplication of a rectangular $\bigO(m/n)\times n$ matrix $\overline{A_{H,V}}$ with a square $n\times n$ matrix $\overline{A_{V,V}}$.

Since already the input size for this matrix product is of size $\Theta(n^2)$, this approach again cannot directly achieve a $m^{1+o(1)}$-time algorithm. To avoid this, one might hope to use techniques for sparse matrix multiplication (see, e.g.,\cite{AlonYZ97,YusterZ05}), since the adjacency matrix of a sparse graph has~$\bigO(m)$ nonzeroes. However, the \emph{non}-adjacency matrix of a sparse graph as required here necessarily has~$\Theta(n^2)$ nonzeroes. Fortunately, we can still formulate the problem as a sparse matrix multiplication: Specifically, we have
\begin{equation}\label{eq:specialcase}
(\overline{A_{H,V}} \cdot A_{V,V})[v_1,v_2] = \overline{\deg(v_1)} \text{ iff } N[v_1]\cup N[v_2] = V,
\end{equation}
where $\overline{\deg(v)} = n - \deg(v)$ denotes the number of nonzeroes of the row corresponding to $v$ in $\overline{A_{H,V}}$. Note that the number of nonzeroes of $\overline{A_{H,V}}$ (as an $\bigO(m/n)\times n$ matrix) and $A_{V,V}$ (as adjacency matrix of an $m$-edge graph) is $\bigO(m)$. Using sparse matrix multiplication/triangle counting, we can solve this problem in time $\bigO(m^{\frac{\omega}{2\omega + 1}})$~\cite{AlonYZ97}. Still, this does not yield linear-time complexity even if $\omega=2$ -- in particular, current sparse matrix multiplication techniques fail to beat $n^{2\pm o(1)}$ running time even for multiplying a $\sqrt{n} \times n$ matrix with a sparse $n\times n$ matrix containing $\bigO(n\sqrt{n})$ nonzeroes~\cite{AmossenP09}.

Our crucial contribution is that, perhaps surprisingly, we can compute the special case outlined by~\eqref{eq:specialcase} faster than computing the full matrix product $B\coloneqq\overline{A_{H,V}} \cdot A_{V,V}$. To this end, we partition~$V$ into logarithmically many groups $V_i$ ($0 \leq i \leq \lfloor \log n \rfloor$) consisting of all nodes with degree in~\makebox{$[2^i, 2^{i+1})$}. For each $v_2\in V_i$, we observe that it can only form a 2-dominating set with some $v_1\in H$ if $\overline{\deg(v_1)} \le \deg(v_2) < 2^{i+1}$.  This simple observation allows us to employ a Bloom-filter-like approach: Let $H_i$ denote set of nodes $v_1\in H$ with $\overline{\deg(v_1)} \le 2^{i+1}$. We construct hash functions \makebox{$h_1,\dots, h_L : V\to [O(2^i)]$} with $L=\bigO(\log n)$ to reduce the inner dimension of the matrix multiplication $\overline{A_{H_i,V}} \cdot A_{V,V_i}$ to size $L\cdot \bigO(2^i)$, such that with high probability, any entry in the result matrix corresponding to $v_1\in H, v_2\in V_i$ is equal to the number of nonzeroes in $v_1$'s row if and only if $v_1,v_2$ are a 2-dominating set. That this is possible is due to the special structure of~\eqref{eq:specialcase} (and would fail for more general decision problems for sparse matrix products). In total, we perform $\bigO(\log n)$ multiplications -- namely, for each $0 \leq i \leq \lfloor \log n\rfloor$, we multiply a $\bigO(m/n)\times \tOh(2^i)$ by a $\tOh(2^i) \times \min\{\frac{m}{2^i}, n\}$ matrix. This can be shown to take time~\smash{$m^{\omega/2+o(1)}$}. In particular, this algorithm runs in almost-linear time~\smash{$m^{1+o(1)}$} if~\makebox{$\omega=2$}. We give all details in Section~\ref{sec:k-dom-set}.

\paragraph{Hardness for Monochromatic Properties}
Recall that for the class of $k$-Dominating Set-like problems, the monochromatic basic problems, we prove that any problem other than $k$-Dominating Set conditionally requires time $n^{k-o(1)}$. For the sketch of this proof, let us consider as a simple exemplary property, the Neighborhood Containment problem: \makebox{$\exists \text{distinct } v_1,v_2\in V: N(v_1) \subseteq N(v_2)$}. Already for this simple problem, we face many technical challenges.

As for all monochromatic basic problems, there is a known hardness reduction (from the $k$-OV problem) to its multichromatic variant $\exists v_1\in V_1 \exists  v_2\in V_2 \forall w\in W: E(v_1,w) \Rightarrow E(v_2,w)$. A natural attempt would be to add auxiliary vertices $V_A$ that enforce any solution $(v_1,v_2)$ with $N(v_1)\subseteq N(v_2)$ to be chosen from $V_1\times V_2$. However, great care has to be taken for these auxiliary vertices: any node in $V_A$ could be chosen itself as $v_1$ or $v_2$. Furthermore, if \emph{any} node in $V_1$ is connected to some auxiliary node $z\in V_A$, then \emph{all} nodes in $V_2$ need to be connected to $z$. To keep the whole instance sparse, this requires adding only a very small number of auxiliary nodes with carefully chosen connections. The situation gets more intricate for more complicated properties such as $\exists \text{ pairwise distinct } x_1,\dots, x_4\in V: N(v_1)\cap N(v_2) \subseteq N(v_3) \cup N(v_4)$, where no two auxiliary nodes $z_1,z_2$ may have disjoint neighborhoods, since otherwise we could set $v_1=z_1,v_2=z_2$ and get a trivial solution with any $v_3,v_4$. For proving hardness, this shows that no gadget can act fully locally, but must take into account the full graph.

To nevertheless prove hardness for all monochromatic properties except $k$-Dominating Set, we proceed via an intermediate step of bichromatic properties. For any $\exists^k\forall$-quantified first-order property $\phi$, we distinguish between its \emph{multichromatic}, \emph{bichromatic} and \emph{monochromatic} versions defined as follows:
\begin{itemize}[itemsep=\smallskipamount]
	\item \emph{Multichromatic:} $\exists v_1\in V_1 \dots \exists v_k\in V_k \,\forall w\in W \,\phi(v_1, \dots, v_k, w)$.
	\item \emph{Bichromatic:} $\exists \text{ pairwise distinct } x_1,\dots, x_k \in  X \,\forall y\in Y \,\phi(x_1, \dots ,x_k, y)$.
	\item \emph{Monochromatic:} $\exists \text{ pairwise distinct } v_1,\dots, v_k \in  V \,\forall v\in V \,\phi(v_1,\dots ,v_k, v)$.
\end{itemize}

\paragraph{Step 1: From Multichromatic to Bichromatic}
The first step is to reduce a multichromatic problem to its bichromatic version: We show how to construct, for any subset $S\subseteq X$, a \emph{solution-excluding gadget} $Y_S$ (with corresponding edges) such that no choice of pairwise distinct vertices~\makebox{$x_1,\dots, x_k\in S$} can satisfy $\forall y\in Y_S: \phi(x_1,\dots, x_k, y)$. Let us call a variable $x_i$ a \emph{positive} variable if $E(x_i,y)$ occurs in $\phi$ and a \emph{negative} variable if $\overline{E(x_i,y)}$ occurs in $\phi$. Furthermore,  let $\phi(x)$ be a unique identifier for $x\in S$ consisting of $\bigO(\log n)$ bits. The main idea is that for every pair of a positive variable $x_s$ and a negative variable $x_j$ we can find a bit position \smash{$i_j^{(s)} \in [O(\log n)]$} where their identifiers $\phi(x_s)$ and $\phi(x_j)$ differ. For every guess of such bit positions and the corresponding bit values $\phi(x_j)[i_j^{(s)}]$, we introduce a corresponding node in $Y_S$ and connect it in such a way to $S$ such that for every choice of pairwise distinct $x_1,\dots, x_k$, the node for the correct guess is adjacent to all negative variables and non-adjacent to all positive variables. The number of nodes in $Y_S$ is at most $\bigO(\log^k n)=\tOh(1)$.

Equipped with this tool, we can create a bichromatic instance as follows: We set $X=V_1\cup \cdots \cup V_k$. To obtain $Y$, we start with $W$, and include, for each subset $S_i = X\setminus V_i$, $i\in [k]$, a \emph{solution-excluding gadget} $Y_{S_i}$ with possibly additional edges to $V_i$, enforcing that the only way to satisfy all nodes in~$Y_{S_i}$ is to pick $x_i\in V_i$.


\paragraph{Step 2: From Bichromatic to Monochromatic}
It remains to reduce the bichromatic to the monochromatic version, at least for the case that $Y$ is of size $\tOh(1)$; note that the reduction from $k$-OV to the bichromatic setting as sketched above indeed maintains $|Y|=\tOh(1)$, so this is sufficient for our purposes. Interestingly, there is a rather simple randomized reduction based on the probabilistic method, and a technically more interesting derandomization.
(This is a common phenomenon for fine-grained reductions involving coding-theory-like gadgets.)
We proceed as follows: Given a bichromatic instance $G=(X\cup Y, E) $ with $Y= \{y_1, \dots, y_M\}$, we aim to construct an equivalent monochromatic instance $G'=(V',E')$. By a \emph{(partial) solution} $P$, we understand a choice for (a subset of) the variables $v_1,\dots, v_k$ (formally, it would be a mapping $\{v_1,\dots, v_k\}\to V\cup \{*\}$, where $*$ denotes an unspecified variable -- by abuse of notation, we view it as a subset $P\subseteq V$, which hides that we need to distinguish between positive and negative variables). We say that $P$ satisfies $v\in V$ if $\phi(v_1,\dots,v_k,v)$ is satisfied by at least one of the assigned variables in $P$.

The idea is to construct a graph $H$ with $M^{f(k)}=\tOh(1)$ nodes such that (1) each node $v\in V(H)$ receives a label $\lambda(v)\in \{1,\dots, M\}$, and (2) for any solution $P$ of size $k$ and every node label $i\in \{1,\dots, M\}$, there exists a node $v\in V(H) \setminus P$ with $\lambda(v)=i$ that is \emph{not satisfied} by $P\cap V(H)$.

Equipped with such an object, we can construct $G'$ by setting $V'=X\cup V(H)$ and adding an edge (1) between $x\in X$ and $v\in V(H)$ iff $x$ and $y_{\lambda(v)}$ were adjacent in $G$, and (2) between $v,v'\in V(H)$ iff $v$ and $v'$ were adjacent in $H$. Note that there are no edges within $X$.

We claim that any satisfying solution $x_1,\dots, x_k$ in $G$ yields a corresponding solution $x_1,\dots,x_k$ in $G'$: Since $\phi$ has at least one negative literal, we have that $\phi(x_1,\dots, x_k,x)$ is satisfied for every $x\in X$.\footnote{This is the argument that crucially fails for the $k$-dominating set property.} We claim that also $\phi(x_1,\dots,x_k,v)$ is satisfied for all $v\in V(H)$, since $\phi(x_1,\dots, x_k,v_{\lambda(v)})$ is satisfied in $G$.

Conversely, we claim that if $G$ has no satisfying solution, then also $G'$ has no satisfying solution. Consider any solution $P$ in $G'$. Since $G$ has no satisfying solution, the partial solution $X\cap P$ must leave at least some vertex $v_i\in Y$ in $G$ unsatisfied. By construction of $H$, there exists some node $\tilde{v}\in V(H) \setminus P$ with label $i$ such that $P\cap V(H)$ does not yet satisfy $\tilde{v}$. However, $\tilde{v}$ also cannot be satisfied by $P\cap X$ by definition of $i$. Thus, $\tilde{v}$ is left unsatisfied by the full solution $P$.

Constructing the graph $H$ via a randomized algorithm is not too difficult using the probabilistic method. However, we are even able to give an explicit, deterministic construction:
The rough idea is to identify the nodes of $H$ with polynomials of degree $d$ over the finite field of size $p$, for suitably chosen $p,d$. Any such polynomial $f$ can be equivalently viewed by its evaluations $(f(1),f(2),\dots, f(d+1))$. We will use appropriate parameters $L < d  < R$ (specifically, $L=d-k$ and $R=d+dk+1$) and identify $(f(1),\dots, f(L))$ with the label $\lambda(f)$ and define an edge between $f$ and $g$ iff there exists some $L < x \le R$ such that $f(x)=g(x)$. Now for any degree-$d$ polynomials $a_1,\dots, a_k$ and $b_1,\dots, b_k$, and any label $(\ell_1,\dots, \ell_L)$, we can prove existence of some degree-$d$ polynomial $f$ with $f(i) = \ell_i$ for all $i\in [L]$, as well as  $\{f,a_j\}\in E$ for all $j\in [k]$ and $\{f,b_j\}\notin E$ for all $j\in [k]$. To do this, one must find $(d+1)-L$ additional evaluations for $f$ such that for each $j\in [k]$ there exists some $L < x \le R$ such that $f(x)=a_j(x)$ while at the same time for each $j\in [k]$, we have $f(x) \ne b_j(x)$ for \emph{all} $L < x \le R$. We give all details in Section~\ref{sec:monochromatic-props}.

\section{Preliminaries}
\label{sec:prelims}
Let $n$ be a positive integer. 
We denote by $[n]$ the set $\{1,\dots, n\}$. 
If $S$ is an $n$-element set and $0\leq k\leq n$ is an integer, then $\binom{S}{k}$ denotes the set of all $k$-element subsets of $S$.
We denote by $\mathcal{P}(S)$ the power set of $S$.

We use $\Tilde{\bigO}$ notation, which hides the poly-logarithmic factors. 
In other words, $f(n)\in \Tilde{\bigO}(g(n))$ if and only if there exists a $k\in \bigO(1)$ such that $f(n)\in \bigO(g(n)\cdot \log^k(n))$.

Let $\omega<2.3729$ \cite{AlmanW21} denote the optimal exponent of multiplying two $n\times n$ matrices and $\omega(a,b,c)$ denote the optimal exponent for multiplying an $n^a\times n^b$ matrix by an $n^b\times n^c$ matrix.

Let $G$ be a graph and $X\subseteq V(G)$.
Then we denote by $G[X]$ the subgraph of $G$ induced by $X$.
If $G,H$ are isomorphic, we write $G\cong H$.
For any vertex $v\in V(G)$, the \emph{neighbourhood} of $v$ is the set of vertices adjacent to $v$, denoted $N(v)$. 
The \emph{closed neighbourhood} of $v$, denoted $N[v]$ is defined as $N[v]:=N(v)\cup\{v\}$.
The \emph{degree} of $v$ denotes the size of its neighbourhood ($\deg(v) = |N(v)|$).
For any two vertices $u,v\in V(G)$, we denote by $d_G(u,v)$ the length of the shortest path between $u$ and $v$ in $G$.

\paragraph{Hypotheses}
Consider the $k$-Orthogonal Vectors problem ($k$-OV) that is stated as follows.
Given $k$ sets $A_1,\dots, A_{k}$ of $d$-dimensional binary vectors, decide whether there exist vectors $a_1\in A_1,\dots, a_k\in A_k$ such that for all $t\in [d]$, it holds that $\prod_{i=1}^{k} a_i[t] = 0$.
A simple brute force approach solves the $k$-OV in time $\bigO(d\cdot \prod_{i\in [k]}|A_i|)$.

On the other hand, it is well known that an algorithm solving $k$-OV with $|A_1| = \dots = |A_k| = n$ and $d=\log^2 n$ in time $\bigO(n^{k-\varepsilon})$ would refute SETH\footnote{$d=\log^2 n$ can be replaced by any $d=\omega(\log n)$.}, which follows by combining a split-and-list reduction~\cite{PatrascuW10} with the sparsification lemma~\cite{ImpagliazzoPZ01}, see~\cite{VassilevskaW18} for details. This conjecture is known as \emph{(low-dimensional) $k$-OV Hypothesis}.
For the purpose of this paper, we consider a more general formulation where we allow the sets to be of different sizes and hence state the $k$-OVH as follows.
\begin{conjecture}[$k$-OVH]
    For no $\varepsilon>0$ and for no $0\leq \gamma_1,\dots,\gamma_k\leq 1$ is there an algorithm solving $k$-OV with $|A_1| = n^{\gamma_1}, \dots, |A_k| = n^{\gamma_k}$, $d=\log^2 n$ in time $\bigO(n^{(\sum_{i=1}^{k}\gamma_i)-\varepsilon})$.
\end{conjecture}
We refer to the setting of $k$-OVH with $\gamma_1 = \dots = \gamma_k = 1$ as \emph{balanced} $k$-OVH. 

These two hypotheses are known to be equivalent (see \cite[Lemma II.1]{BringmannFOCS15} for a proof for $k=2$).
Below we give a proof for general $k$.
\begin{lemma}
    Balanced $k$-OVH and $k$-OVH are equivalent.
\end{lemma}
\begin{proof}
    $k$-OVH implies balanced $k$-OVH trivially.
    Conversely, we show that refuting $k$-OVH refutes the balanced $k$-OVH.
    To this end, assume that for some $0\leq \gamma_1,\dots, \gamma_k\leq 1$ there exists an algorithm $\mathcal{A}$ solving the $k$-OV with $|A_1| = n^{\gamma_1},\dots, |A_k| = n^{\gamma_k}$ in time $\bigO(n^{(\sum_{i=1}^{k}\gamma_i)-\varepsilon})$.

    Then given an instance of balanced $k$-OV with $|A_1| = \dots = |A_k| = n$, we can partition each of the sets $A_i$ into $s := \lceil n^{1-\gamma_i}\rceil$ subsets $A_i^{1},\dots, A_i^{s}$ and run the algorithm $\mathcal{A}$ on each combination of subsets and return true if for at least one instance the algorithm $\mathcal{A}$ returns true and false otherwise. 
    Clearly, this yields a correct algorithm for the balanced $k$-OV and it runs in time $\bigO(n^{1-\gamma_1}\cdots n^{1-\gamma_k}\cdot n^{(\sum_{i=1}^{k}\gamma_i)-\varepsilon}) = \bigO(n^{k-\varepsilon})$.
\end{proof}
\section{Algorithms and Hardness of \texorpdfstring{\boldmath$k$}{k}-Dominating Set}
\label{sec:k-dom-set}
In this section we provide our algorithms and conditional hardness results for the $k$-Dominating Set problem. We start by recalling the baseline algorithm (Proposition~\ref{prop:k-DomBaseline}) in Section~\ref{sec:k-dom-set:sec:baseline}. Then, in Section~\ref{sec:k-dom-set:sec:2-dom-set} we develop our improved algorithm for the $2$-Dominating Set problem in sparse graphs, and in Section~\ref{sec:k-dom-set:sec:lower-bound} we strengthen the conditional lower bounds for $k$-Dominating Set to match our algorithms.

\subsection{The Baseline Algorithm} \label{sec:k-dom-set:sec:baseline}
For the baseline algorithm (and also for our later improvements) we rely on the following simple observation:
\begin{observation} \label{obs:heavy-vertex}
    Given a graph $G$, let $S = \{v_1,\dots, v_k\}$ be a dominating set of $G$.
    Then, there exists $1\leq i \leq k$, such that $\deg(v_i)\geq \frac{n}{k}-1$.
\end{observation}
We call any such vertex $v\in V(G)$ with $\deg(v)\geq \frac{n}{k}-1$ a \emph{high-degree} vertex.
Observe that in any graph $G$ with $n$ vertices and $m$ edges, there are at most $\bigO(\frac{m}{n})$ high-degree vertices (assuming~$k$ is a fixed constant).
For the rest of this section, let $G$ denote a graph with $n$ vertices and $m=n^\gamma$ edges for some $1\leq \gamma\leq 2$ and let $H\subseteq V$ denote the set of high-degree vertices.

\propbaseline*
\begin{proof}
    Let $\mathcal{S}\subseteq \mathcal{P}(V)$ and denote by $B_\mathcal{S}$ the binary matrix whose rows are indexed by $\mathcal{S}$ and whose columns are indexed by $V$ and the entry $B_\mathcal{S}[S,v] = 1$ if and only if $S$ dominates~$v$ (i.e., $v\in S$ or there exists $s\in S$, such that $\{s,v\}\in E$).
    Similarly, for $\mathcal{T}\subseteq \mathcal{P}(V)$ let $C_\mathcal{T}$ denote the matrix whose rows are indexed by $V$ and whose columns are indexed by $\mathcal{T}$ and the entry $C_\mathcal{T}[v,T] = 1$ if and only if $T$ dominates $v$. Clearly, we have $(\overline{B_\mathcal{S}}\cdot \overline{C_\mathcal{T}})[S,T] = 0$ if and only if $S\cup T$ dominate $G$.

    Let $H\subseteq V$ be the set of all high degree vertices and let $\mathcal S$ and $\mathcal{T}$ be defined as follows:
    \begin{equation*}
        \mathcal S =  \binom{V}{\lceil \frac{k-1}{2}\rceil}, \; \mathcal T = \big \{T \in \binom{V}{\lfloor \frac{k-1}{2}\rfloor + 1} \mid T\cap H \neq \emptyset \big \}.
    \end{equation*}
    Recall that any $k$-dominating set $S^*$ contains at least one high-degree vertex (Observation~\ref{obs:heavy-vertex}). We can therefore partition $S^*$ into two subsets $S^* = S \cup T$, where $S$ has size~\smash{$\lceil{\frac{k-1}{2}}\rceil$}, $T$ has size~\smash{$\lfloor\frac{k-1}{2}\rfloor + 1$} and $T$ contains the high-degree vertex. It follows that the graph contains a $k$-dominating set if and only if the matrix product $\overline{B_\mathcal{S}}\cdot \overline{C_\mathcal T}$ contains a zero entry.

    It remains to analyze the running time of computing this matrix product. Observe that
    \begin{equation*}
        |\mathcal S| = \bigO(n^{\lceil \frac{k-1}{2}\rceil}),\; |\mathcal T| = \bigO\left(\frac{m}{n}\cdot n^{\lfloor \frac{k-1}{2}\rfloor}\right) = \bigO(n^{\gamma - 1 + \lfloor \frac{k-1}{2}\rfloor}).
    \end{equation*}
    Thus, $B_{\mathcal S}$ is an $\bigO(n^{\lceil{\frac{k-1}{2}}\rceil}) \times n$ matrix and $C_{\mathcal T}$ is an $n \times \bigO(n^{\gamma-1+\lfloor\frac{k-1}{2}\rfloor})$ matrices, and computing their product takes time $\bigO(n^{\omega+o(1)} + n^{\omega(\lceil \frac{k-1}{2} \rceil, 1, \lfloor \frac{k-1}{2} \rfloor + \gamma - 1) + o(1)})$ as claimed.
    %
    %
\end{proof}

\subsection{A Faster Algorithm for \texorpdfstring{\boldmath$2$}{2}-Dominating Set} \label{sec:k-dom-set:sec:2-dom-set}
In this section we design our improved algorithm for the $2$-Dominating Set for sparse graphs. Specifically, our goal is to prove the following theorem:\footnote{By employing the state-of-the-art fast rectangular matrix multiplication techniques (e.g.~\cite{GallU18}), a more fine-grained analysis of the algorithm reveals that we can achieve even slightly better running time in very sparse graphs.}

\thmtwodomsetalgo*

Our strategy is to phrase our 2-Dominating Set algorithm as an algorithmic \emph{reduction} to the following intermediate problem: 

\begin{definition}[Max-Entry Matrix Product]\label{def:max-entry}
Consider $0$-$1$-matrices $B$ of size $\Order(\frac mn) \times n$ and~$C$ of size $n \times n$, where~$C$ contains at most $m$ nonzeros. The \emph{Max-Entry Matrix Product} problem is to decide whether there exist $i, j$ such that
\begin{equation*}
    (B \cdot C)[i, j] = \sum_k B[i, k] \cdot C[k, j] = \sum_k B[i, k].
\end{equation*}
\end{definition}

We remark that, when viewing $B$ and $C$ as the bi-adjacency lists of a tripartite graph, the Max-Entry Matrix Product problem asks whether there is a pair $i, j$ of outer vertices such that every edge from $i$ can be extended to a 2-path to $j$. Another equivalent formulation is in terms of the \emph{Subset Query} problem: Given families $\mathcal S, \mathcal T$ of subsets of some universe $[U]$, the goal is to test whether there exist sets $S \in \mathcal S$ and $T \in \mathcal T$ such that $S \subseteq T$. The Max-Entry Matrix Product problem is exactly the special case of Subset Query where $U = n$, $|\mathcal S| = \Order(\frac{m}{n})$ and $\sum_{T \in \mathcal T} |T| = \Order(m)$. While the (unrestricted) Subset Query problem has been studied in previous works~\cite{GaoIKW19,CharikarIP02}, we decided to stick to the matrix version from Definition~\ref{def:max-entry} which is more in line with our view on the $k$-dominating set problem as explained in the overview.

The Max-Entry Matrix Product problem can naively be solved in time $\Order(m \cdot \frac{m}{n})$ (enumerate an index $i$ and a nonzero entry $C[k, j])$ and in time $\MM(\frac{m}{n}, n, n)$ by fast matrix multiplication. As we will prove later, by a more elaborate algorithm the Max-Entry Matrix Product problem can be solved in time $m^{\omega/2+o(1)}$.

Before, the first step towards our algorithm is to reduce $2$-Dominating Set on sparse graphs to Max-Entry Matrix Product. The reduction is somewhat similar to the baseline algorithm in the previous section. 

\begin{lemma}[2-Dominating Set to Max-Entry Matrix Product] \label{lemma:domset-sparse-mm-reduction}
If the Max-Entry Matrix Product problem can be solved in time $T(n, m)$, then the 2-Dominating Set problem can be solved in time $\bigO(T(n, m))$.
\end{lemma}
\begin{proof}
Let $G = (V, E)$ be a given 2-Dominating Set instance. Let $A$ be the adjacency matrix of~$G$, where we understand that $A[i,i] = 1$ for all $i$. Moreover, let $H \subseteq V$ denote the subset of high-degree vertices (with degree at least $\frac n2 - 1$), and let $\overline{A_{H,V}}$ denote the complement of the adjacency matrix restricted to~\makebox{$H \times V$}. We claim that $\{u, v\}$ is a 2-dominating set of $G$ if and only if $(\overline{A_{H,V}} \cdot A)[u,v] = \deg_{\overline{G}}(u)$.

Indeed, note that $(\overline{A_{H,V}} \cdot A)[u,v] = \sum_{w\in V} \overline{A_{H,V}}[u,w]\cdot  A[w,v]$ counts the number of vertices~$w$ that are not dominated by $u$, but are dominated by $v$. Furthermore, $u\in H$ and $v\in V$ form a 2-dominating set if and only if every node $w$ that is not dominated by $u$ is dominated by $v$. Since there are precisely $\deg_{\overline{G}}(u)$ nodes that are non-dominated by $u$, we conclude that $u$ and $v$ form a 2-dominating set if and only if $(\overline{A_{H,V}} \cdot A)[u,v]=\deg_{\overline{G}}(u) = \sum_{w} \overline{A_{H,V}}[u, w]$.

Picking $B = \overline{A_{H, V}}$ (which has size $\Order(\frac mn) \times n$) and $C = A$ (which has size $n \times n$ and is $m$-sparse), we have therefore successfully reduced to an instance of Max-Entry Matrix Product. Constructing these matrices runs in linear time and is therefore negligible.
\end{proof}

We proceed to the core of our algorithm.

\begin{lemma}[Max-Entry Matrix Product to Rectangular Matrix Multiplication] \label{lem:2-dom-set-rectangular}
There is a randomized algorithm solving Max-Entry Matrix Product in time
\begin{equation*}
    \widetilde\bigO\left(\max_{1 \leq d \leq n} \MM\!\left(\frac{m}{n},\, d,\, \min\left\{n, \frac md\right\}\right)\right).
\end{equation*}
\end{lemma}
\begin{proof}
For an index $j$, the \emph{degree} $\sum_k C[k, j]$ denotes the number of nonzero entries~$C[k, j]$.
As a first step, we will split $C$ column-wise into $L = \lfloor\log n\rfloor$ many submatrices $C_0, \dots, C_L$ such that all degrees in $C_\ell$ are in the range $[2^\ell, 2^{\ell+1})$. Each such submatrix still has size at most $n \times n$, and contains at most $m$ nonzero entries. Even better: Since each column in $C_\ell$ contains at least $2^\ell$ nonzero entries, there can be at most $\frac{m}{2^\ell}$ columns in $C_\ell$, and thus $C_\ell$ has size at most $n \times \min\{n, \frac{m}{2^\ell}\}$. For the rest of the proof, fix any $0 \leq \ell \leq L$; we solve the Max-Entry Matrix Product problem on~$(B, C_\ell)$ (for simplicity we will often omit the subscript from $C_\ell$).

Our goal is to reduce Max-Entry Matrix Product to a regular matrix product. Unfortunately, while the outer dimensions are comparably small ($\frac mn$ and $\min\{n, \frac{m}{2^\ell}\}$), the inner dimension can be as large as $n$. The key idea is to apply a Bloom-filter-like construction to also compress the inner dimension. To this end, sample a uniformly random hash function $h : [n] \to [2^{\ell+2}]$ and consider the matrices $B'$ (of size $\Order(\frac mn) \times 2^{\ell+2}$) and $C'$ (of size $2^{\ell+2} \times \min\{n, \frac{m}{2^\ell}\}$) defined by
\begin{align*}
    B'[i, k'] &= \bigvee_{\substack{k \in [n]\\h(k) = k'}} B[i, k], \\
    C'[k', j] &= \bigvee_{\substack{k \in [n]\\h(k) = k'}} C[k, j].
\end{align*}
We claim that these matrices preserve solutions in the following sense:

\begin{claim} \label{lem:2-dom-set-rectangular:clm:soundness}
If $(B \cdot C)[i, j] = \sum_k B[i, k]$, then $(B' \cdot C')[i, j] = \sum_{k'} B'[i, k']$.
\end{claim}
\begin{proof}
Assume that $(B \cdot C)[i, j] = \sum_k B[i, k] \cdot C[k, j] = \sum_k B[i, k]$, and take any $k'$ with $B'[i, k'] = 1$. By construction, there is some $k \in [n]$ with $h(k) = k'$ and $B[i, k] = 1$. Our initial assumption implies that also $C[k, j] = 1$. But then our construction assigns $C'[k', j] = 1$. Since $k'$ was arbitrary, it follows that indeed $(B' \cdot C')[i, j] = \sum_{k'} B'[i, k'] \cdot C[k', j] = \sum_{k'} B[i, k']$.
\end{proof}

\begin{claim} \label{lem:2-dom-set-rectangular:clm:completeness}
If $(B \cdot C)[i, j] < \sum_k B[i, k]$, then $(B' \cdot C')[i, j] < \sum_{k'} B'[i, k']$ with probability at least~$\frac12$.
\end{claim}
\begin{proof}
From the assumption that $(B \cdot C)[i, j] < \sum_k B[i, k]$, it follows that there is some $k^* \in [n]$ with $B[i, k^*] = 1$ and $C[k^*, j] = 0$. We claim that with probability at least $\frac12$, there is no index~\makebox{$k \neq k^*$} such that $h(k) = h(k^*)$ and $C[k, j] = 1$. Indeed, for any fixed $k \neq k^*$ the event $h(k) = h(k^*)$ happens with probability at most $\frac{1}{2^{\ell+2}}$. Moreover, in the $j$-th row of $C$ there are at most $2^{\ell+1}$ nonzero entries $C[k, j]$, thus by a union bound the error probability is bounded by \smash{$\frac{2^{\ell+1}}{2^{\ell+2}} = \frac12$}.
\end{proof}

This suggests the following algorithm: Repeat, for $R = 100 \log n$ iterations, the construction of~$B'$ and~$C'$ (with fresh randomness). We solve the Max-Entry Matrix Product problem on $(B', C')$ using fast matrix multiplication. If there is a pair $(i, j)$ such that across all repetitions we have $(B' \cdot C')[i, j] = \sum_{k'} B'[i, k']$, then we report ``yes''. Otherwise, we report ``no''. By Claim~\ref{lem:2-dom-set-rectangular:clm:soundness}, we will always correctly report ``yes'' instances. In a ``no'' instance, the probability that we mistakenly report ``yes'' due to some false positive $(i, j)$ is bounded by $2^{-R} \leq n^{-100}$ by Claim~\ref{lem:2-dom-set-rectangular:clm:completeness}. Taking a union bound over the at most $n^2$ pairs, the total error probability is at most $n^{-98}$.

Let us finally consider the running time. For a fixed $\ell$, we compute $\Order(\log n)$ matrix products of size $\Order(\frac mn) \times 2^{\ell+1} \times \min\{n, \frac m{2^\ell}\}$. This takes time $\widetilde\bigO(\MM(\frac mn, 2^\ell, \min\{n, \frac{m}{2^\ell}\}))$. It takes linear time to construct the matrices $B'$ and $C'$, respectively, and this overhead is negligible in the running time. Summing over the $\log n$ levels $0 \leq \ell \leq L = \lfloor{\log n}\rfloor$, the total time is
\begin{equation*}
    \sum_{\ell=0}^{\lfloor\log n\rfloor} \widetilde\bigO\left(\MM\left(\frac mn, 2^\ell, \min\left\{n, \frac{m}{2^\ell}\right\}\right)\right) = \max_{1 \leq d \leq n} \widetilde\bigO\left(\MM\left(\frac mn, d, \min\left\{n, \frac{m}{d}\right\}\right)\right),
\end{equation*}
which is as claimed.
\end{proof}

This completes the description of the Max-Entry Matrix Product algorithm; however, it remains to carefully analyze the complicated running time expression:

\begin{corollary}[Max-Entry Matrix Product] \label{cor:max-entry}
There is a randomized algorithm solving Max-Entry Matrix Product in time $m^{\omega/2+o(1)}$.
\end{corollary}
\begin{proof}
We use the algorithm from Lemma~\ref{lem:2-dom-set-rectangular}. It runs in time $\widetilde\bigO(\max_{1 \leq d \leq n} T(n, m, d))$ where
\begin{equation*}
    T(n, m, d) = \MM\!\left(\frac{m}{n},\, d,\, \min\left\{n, \frac md\right\}\right),
\end{equation*}
and it remains to bound $T(n, m, d) \leq m^{\frac{\omega}{2}+o(1)}$ for any choice of $1 \leq d \leq n$. Throughout this proof, we will only use that $\MM(N, N, N) \leq N^{\omega+o(1)}$ (by fast square matrix multiplication) and that trivially $\MM(N_1 \cdot N_1', N_2, N_3) = \bigO(N_1') \cdot \MM(N_1, N_2, N_3)$ (similarly for $N_2$ and $N_3$). We distinguish the following three cases, each of which can be proved by a calculation:
\begin{itemize}
    \item \emph{Case 1: $n \leq \frac{m}{d}$:}
    \begin{align*}
        T(n, m, d) &\leq \MM(\tfrac mn, \tfrac mn, n) && (d \leq \tfrac{m}{n}) \\
        &\leq \bigO(\tfrac{n^2}{m}) \cdot \MM(\tfrac mn, \tfrac mn, \tfrac mn) && (n^2 \geq m) \\
        &\leq \Order(\tfrac{n^2}{m}) \cdot (\tfrac{m}{n})^{\omega+o(1)} \\
        &\leq \tfrac{m^{\omega-1+o(1)}}{n^{\omega-2}} \\
        &\leq m^{\omega/2+o(1)}. && (n^2 \geq m)
    \end{align*}
    \item \emph{Case 2: $n \geq \frac{m}{d}$ and $d \leq \frac{m}{d}$:}
    \begin{align*}
        T(n, m, d) &\leq \MM(d, d, \tfrac{m}{d}) && (n \geq \tfrac md) \\
        &\leq \Order(\tfrac{m}{d^2}) \cdot \MM(d, d, d) && (d \leq \tfrac md) \\
        &\leq \Order(\tfrac{m}{d^2}) \cdot d^{\omega+o(1)} \\
        &\leq m \cdot d^{\omega-2+o(1)} \\
        &\leq m \cdot \sqrt m^{\omega-2+o(1)} && (d \leq \tfrac md) \\
        &\leq m^{\omega/2+o(1)}.
    \end{align*}
    \item \emph{Case 3: $n \geq \frac{m}{d}$ and $d \geq \frac{m}{d}$:}
    \begin{align*}
        T(n, m, d) &\leq \MM(\tfrac{m}{d}, d, \tfrac{m}{d}) &&(d \leq n, n \geq \tfrac md) \\
        &\leq \Order(\tfrac{d^2}{m}) \cdot \MM(\tfrac{m}{d}, \tfrac{m}{d}, \tfrac{m}{d}) &&(d \geq \tfrac md) \\
        &\leq \Order(\tfrac{d^2}{m}) \cdot (\tfrac{m}{d})^{\omega+o(1)} \\
        &\leq \tfrac{m^{\omega-1+o(1)}}{d^{\omega-2}} \\
        &\leq \tfrac{m^{\omega-1+o(1)}}{\sqrt{m}^{\omega-2}} &&(d \geq \tfrac md)  \\
        &\leq m^{\omega/2+o(1)}.
    \end{align*}
\end{itemize}
In all three cases, we have successfully bounded the running time by $m^{\omega/2+o(1)}$. This completes the proof.
\end{proof}



We have assembled all parts of the proof of Theorem~\ref{theorem:DomSet-algorithm}: We combine the reduction from Lemma~\ref{lemma:domset-sparse-mm-reduction} with the efficient Max-Entry Matrix Product algorithm from Corollary~\ref{cor:max-entry}. It follows that the 2-Dominating Set problem is in time $m^{\omega/2+o(1)}$.

Recall that the baseline algorithm (Proposition~\ref{prop:k-DomBaseline}) in combination with our 2-Dominating Set algorithm (Theorem~\ref{theorem:DomSet-algorithm}) prove that for $\omega = 2$, the $k$-Dominating Set problem can be solved in time~\smash{$m n^{k-2+o(1)}$} for all $k \geq 2$. In the next section, we complement this upper bound by a matching conditional lower bound.

\subsection{A Matching Conditional Lower Bound}\label{sec:k-dom-set:sec:lower-bound}
In this section our goal is to prove the following theorem:
\thmdomsetlowerbound*

The idea is to adapt the reduction by Pătraşcu and Williams~\cite{PatrascuW10} to create a sparse instance.
In particular, we reduce from the unbalanced variant of $k$-OV problem obtained by setting $|A_1| = \dots = |A_{k-1}| = n$ and $|A_k| = n^{\gamma-1}$ for the suitable $1\leq \gamma \leq 2$, and then we use the vertices corresponding to vectors in $A_k$ to dominate the remaining sets. 
\begin{figure}
	\centering
	\includegraphics{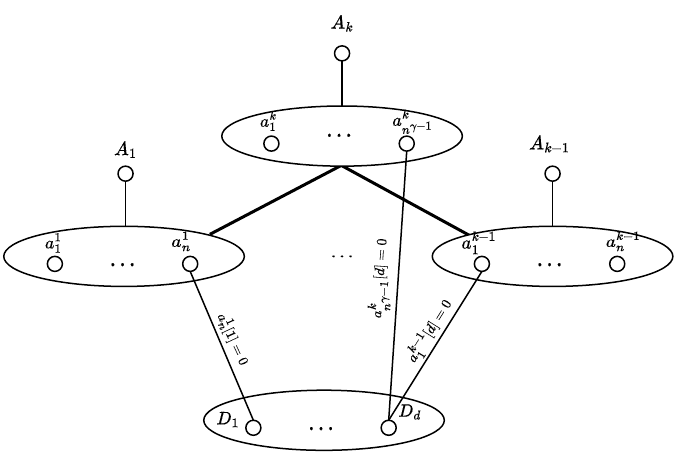}
	\caption{Schematic representation of the graph $G$ as constructed in proof of Theorem \ref{theorem:DomSet-lower-bound}. Bold edges represent bicliques between the corresponding sets.}
	\label{fig:k-ds-lower}
\end{figure}
\begin{proof}
    To prove this lower bound, we reduce from $k$-Orthogonal Vectors problem.

    Let $A_1,\dots, A_{k-1}$ denote sets of $d$-dimensional (assume $d=\bigO(\log^2 n)$) binary vectors of size $n$ and $A_k$ be a set of $d$-dimensional binary vectors of size $n^{\gamma-1}$ for $1\leq \gamma\leq 2$.
    We construct a graph $G$ with $\Tilde{\bigO}(n)$ vertices and $\Tilde{\bigO}(n^{\gamma})$ edges that has dominating set of size $k$ if and only if we can find vectors $a_1\in A_1,\dots, a_k\in A_k$ such that $\prod_{i=1}^k a_i[j] = 0$ for every $j = 1,\dots, d$.

    Consider the following construction illustrated in Figure \ref{fig:k-ds-lower}.
    Let $V(G)$ consist of the $k$ vertices corresponding to sets $A_1,\dots, A_k$, the $\bigO(n)$ vertices corresponding to the vectors, labeled $a^i_j\in A_i$, and the vertices labeled $D_1,\dots, D_d$, corresponding to the $d$ dimensions. 
    For each $1\leq i \leq k$ add all edges between $A_i$ and $a^i_j$ for all $j$.      
    For all $1\leq i\leq k-1$ add an edge between every vertex corresponding to a vector contained in $A_i$ (namely all the vertices $a^i_j$) and all the vertices corresponding to vectors in $A_{k}$
    Add an edge between the vertices $a^i_j$ and $D_t$ if and only if $a^i_j[t] = 0$ for all $i,j,t$.

    The graph $G$ clearly has $\bigO(n + d) = \bigO(n)$ vertices and observe that there are $\bigO(n^{\gamma-1} + d) = \bigO(n^{\gamma-1})$ vertices with degree at most $\bigO(n)$ (in particular the vertices corresponding to vectors in $a^k_i$ for all $i\in [n^{\gamma-1}]$, vertices $D_1,\dots, D_d$ and the vertices $A_i$ for all $i\in [k]$). 
    All other vertices have degree at most $n^{\gamma-1}$, hence the total number of edges is at most $\bigO(n\cdot n^{\gamma-1} + dn) = \Tilde{\bigO}(n^{\gamma})$ and consequently, the graph $G$ can be constructed in time $\bigO(n^\gamma d) = \Tilde{\bigO}(n^{\gamma})$.
    
    It is now sufficient to argue that $G$ has a dominating set of size $k$ if and only if there exist vectors $a_1\in A_1, \dots, a_k\in A_k$ such that $a_1[i]\cdot \ldots \cdot a_k[i] = 0$ for every $i\in [d]$.
    To this end, suppose first that such vectors exist.
    Then we show that the vertices corresponding to $a_1,\dots, a_k$ form a dominating set.
    
    First observe that the vertex corresponding to $a_i$ is adjacent to the vertex $A_i$, and hence the vertices $A_1,\dots, A_k$ are dominated.
    Furthermore, since all vertices $a^i_j$ (for $1\leq i < k$) are adjacent to all the vertices corresponding to vectors in $A_k$, clearly all such vertices are dominated by $a_k$ and symmetrically all vertices $a^k_j$ are dominated by e.g. $a_1$. 

    It is only left to verify that the vertices $D_1,\dots, D_d$ are dominated.
    Consider arbitrary $D_i$.
    Now, since $a_1[i]\cdot \dots \cdot a_k[i] = 0$, there exists some $1\leq j\leq k$, such that $a_j[i] = 0$ and consequently, by construction of $G$, $a_j$ is adjacent to $D_i$.

    Conversely, suppose that $G$ assumes a dominating set $S = \{s_1,\dots, s_k\}$ of size $k$. 
    Consider first the vertices $A_1,\dots, A_k$.
    Clearly, their closed neighbourhoods are disjoint, hence each vertex $s_i$ from $S$ is contained in $N[A_i]$ (without loss of generality).
    Furthermore, since $\{D_1,\dots, D_d\}\cap S = \emptyset$, and the vertices $D_t$ are non-adjacent to the vertices $A_1,\dots, A_k$, there exists at least one $i\in [k]$, such that $s_i = a^i_j$ for some $j$.
    
    We construct the solution of $k$ orthogonal vectors as follows.
    For each $s_i$ such that $s_i = A_i$, let $a_i$ be an arbitrary vector from $A_i$.
    For the remaining vertices let $a_i$ be the vector corresponding to $s_i$.
    We claim that $a_1\in A_1,\dots, a_k\in A_k$ satisfy $a_1[i]\cdot \ldots \cdot a_k[i] = 0$ for all $i\in [k]$.
    To see this, consider an arbitrary $i$.
    Observe that $D_i$ is dominated by some $s_j$, and in particular $s_j\neq A_j$. 
    Hence $a_j[i] = 0$ and the claim follows.
\end{proof}

\section{Hardness of Monochromatic Basic Problems}
\label{sec:monochromatic-props}

In this section, we will prove hardness of all monochromatic basic problems except $k$-Dominating Set in sparse graphs. We proceed in three steps: (1) First, we recall hardness established in~\cite{GaoIKW19} for all multichromatic basic problems, which already holds in sparse graphs. (2) We then reduce each multichromatic basic problem to its bichromatic version (where one of the parts maintains a size of $\tOh(1)$). (3) Finally, we reduce each such bichromatic problem except bichromatic $k$-Dominating Set (i.e., $k$-Set Cover) to its monochromatic version.

For completeness, we begin with a proof that all Multichromatic Basic Problems are $k$-OV hard even in sparse graphs.
\begin{proposition}[Implicit in~\cite{GaoIKW19}]
    Let $\Phi$ be a basic property of order $k$.
    Given an instance $A_1,\dots, A_k\subseteq\{0,1\}^d$ of $k$-OV, we can construct in time $\Tilde\bigO(n)$ an equivalent instance $G = (V_1,\dots, V_k, W, E)$ of the Multichromatic Basic Problem for $\Phi$ with at most $\Tilde{\bigO}(n)$ edges and $|W|=\tOh(1)$.
\end{proposition}
\begin{proof}
    Without loss of generality, assume that $\Phi= E(v_1,w)\lor \cdots \lor E(v_\ell, w) \lor \overline{E(v_{\ell+1},w)}\lor \cdots \lor \overline{E(v_k,w)}$.
    We call a vertex $v_i$ \emph{positive} (resp.\ \emph{negative}) if $\Phi$ contains the literal $E(v_i,w)$ (resp. $\overline{E(v_i,w)}$).
    For each dimension $i$, add a vertex $w_i$ to $W$ and let $V_i = A_i$ for all $i\in [k]$.
    Now for every vertex $v\in V_1,\dots, V_{\ell}$, add an edge between $v$ and $w_i$ if and only if the vector corresponding to $v$ has $i$-th entry equal to zero.
    Similarly, for every vertex $v\in V_{\ell + 1},\dots, V_{k}$, add an edge between $v$ and $w_i$ if and only if the vector corresponding to $v$ has $i$-th entry equal to one.

    If $A_1,\dots, A_k$ is a yes-instance of $k$-OV, then we can find vectors $a_1\in A_1,\dots, a_k\in A_k$ such that for any dimension $j$, some vector $a_i$ satisfies $a_i[j] = 0$.
    We proceed to show that choosing $v_1\in V_1, \dots, v_k\in V_k$ such that $v_i$ is the vertex corresponding to $a_i$ satisfies $\Phi$.
    If $1\leq i\leq \ell$, then $\Phi$ contains the literal $E(v_i,w)$ and the corresponding vertex $v_i$ is by construction adjacent to $w_j$ satisfying $\Phi$.
    Otherwise, $\Phi$ contains the literal $\overline{E(v_i,w)}$ and the corresponding vertex from $v_i$ is non-adjacent to $w_j$, again satisfying $\Phi$.

    On the other hand, if $A_1,\dots, A_k$ is a no-instance of $k$-OV, then for every selection of vectors $a_1\in A_1,\dots, a_k\in A_k$ there exists a dimension $j$, such that $a_i[j] = 1$.
    In particular, if we choose any $v_1\in V_1,\dots, v_k\in V_k$, then considering the corresponding vectors yields a vertex $w_j\in W$ that is non-adjacent to all $v_1,\dots, v_\ell$ and adjacent to all $v_{\ell+1},\dots, v_k$, thus leaving $\Phi(v_1,\dots, v_k,w_j)$ unsatisfied. 

    Note that $|W|=d=\tOh(1)$. Since every edge in $G$ connects $v\in V_i$ with $w\in W$, the total number of edges is at most $\bigO((\sum_{i=1}^{k}|V_i|)\cdot|W|) = \bigO(nd) = \Tilde{\bigO}(n)$.
\end{proof}

\subsection{Multichromatic to Bichromatic}
In order to prove the hardness of Monochromatic Basic Problems with at least one negative literal in the sparse graphs, we introduce the intermediate class of \emph{Bichromatic Basic Problems} defined as follows.
Let $k$ be a fixed constant and for every $i\in [k]$, let $L_i:X\times Y \to \{0,1\}$ be either defined as $L_i(x,y) = E(x,y)$, or $L_i(x,y) = \overline{E(x,y)}$.
Now all the Bichromatic Basic Problems can be stated in the following way.
Given a bipartite graph $G = (X,Y,E)$ decide if there is a set of pairwise distinct vertices $x_1,\dots, x_k\in X$, such that for all $y\in Y$ it holds that $L_1(x_1,y)\lor\dots \lor L_k(x_k,y)$?

We proceed to prove that all the Bichromatic Basic Problems are $k$-OV hard already in sparse graphs, by reducing from the corresponding Multichromatic Basic Problems.
To this end, for any subset $S\subseteq X$, we first show how to construct an \emph{identifier gadget} $J_S$, such that if we add corresponding edges between $S$ and $J_S$ then for every $x_1,\dots,x_k \in S$ there exists $y\in J_S$ such that $\phi(x_1,\dots,x_k,y)$ is not satisfied.
In order to do this, we first need to introduce some tools.

Given a set $S$ with $n$ elements, let $\phi:S\to \{0,1\}^{\log(n)}$ be an injective function called the \emph{identifier} of $S$. 
Let $x$ be a binary vector with $d$ dimensions and define
$$f(x) = \Big \{\big(i_1^{(1)},\dots, i_k^{(1)},\dots,i_1^{(k)},\dots i_k^{(k)}\big), \big(b_1^{(1)}, \dots, b_k^{(1)},\dots,b_1^{(k)},\dots, b_k^{(k)}\big ) \mid \bigwedge_{s=1}^k \big(\bigvee_{t=1}^k x[i_t^{(s)}] = b_t^{(s)}\big)\Big \}$$
Let $S$ be a set of $n$ elements and let
$$T = \Big\{\big(x,I,B\big) \mid x\in S, (I,B) \in f(\phi(x)) \Big\}$$
For the rest of this section, let $S,T,f,\phi$ be as above. 
The following lemma shows that for any subset $S'$ consisting of at most $2k$ elements from $S$, by labeling at most $k$ distinct elements in $S'$ \emph{positive} and at most $k$ distinct elements in $S'$ \emph{negative} (such that no element is labeled both positive and negative), we can always find the the tuple of corresponding indices $I$ and bits $B$, such that $(x,I,B)$ is in $T$ for each positive element $x$ and $(y,I,B)$ is not in $T$ for any negative element $y$.
This will be a key observation allowing us to build the gadget that has the desired properties.
\begin{lemma}
    For every subset $\{x_1,\dots, x_k, y_1, \dots, y_k\}\subseteq S$, there exist indices $I$, and bits $B$ such that $(x_i,I, B)\in T$ and $(y_i,I, B)\not \in T$ for all $i$.
\end{lemma}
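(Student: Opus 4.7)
The plan is to exhibit an explicit witness $(I,B)$ using the injectivity of $\phi$. The key observation is that the $k^2$ index--bit pairs in $(I,B)$ naturally split into $k$ ``rows'' indexed by $s\in [k]$: membership $(I,B)\in f(\phi(x))$ requires every row to have at least one matching coordinate (since the outer operator is a conjunction over $s$). This suggests tailoring row $s$ to ``kill'' the negative vertex $y_s$, while simultaneously ensuring each row remains satisfied by every $x_j$ through a careful choice of the $k$ columns in that row.

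Concretely, since $\phi$ is injective and the $2k$ elements $x_1,\dots,x_k,y_1,\dots,y_k$ are distinct in $S$, for each pair $(i,t)\in [k]\times [k]$ we may choose a distinguishing coordinate $c_{i,t}\in [\log n]$ with $\phi(y_i)[c_{i,t}]\neq \phi(x_t)[c_{i,t}]$. I would then define the witness by setting $i_t^{(s)}:=c_{s,t}$ and $b_t^{(s)}:=\phi(x_t)[c_{s,t}]$ for all $s,t\in [k]$. Intuitively, entry $(s,t)$ of the grid is designed to match $x_t$ on the coordinate that distinguishes $x_t$ from $y_s$.

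The verification has two parts. For each $j\in [k]$ and each row $s\in [k]$, taking column $t=j$ yields $\phi(x_j)[i_j^{(s)}]=\phi(x_j)[c_{s,j}]=b_j^{(s)}$, so row $s$ is satisfied by $\phi(x_j)$; since this holds for every $s$, we conclude $(x_j,I\times B)\in T$. For each $i\in [k]$, consider row $s=i$: for every column $t$ we have $\phi(y_i)[i_t^{(i)}]=\phi(y_i)[c_{i,t}]\neq \phi(x_t)[c_{i,t}]=b_t^{(i)}$, so row $i$ is entirely unsatisfied by $\phi(y_i)$, and hence $(y_i,I\times B)\notin T$ as required.

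The main conceptual idea is the double-indexed structure of the tuple: rows are used to eliminate negative vertices and columns to cover positive vertices, with each cell set to simultaneously match the column's $x_t$ while mismatching the row's $y_s$. I do not anticipate a significant obstacle beyond correctly identifying this row/column interpretation; once it is in place, the argument reduces to a clean application of injectivity of $\phi$, and the assumption $n$ is large enough that $[\log n]$ contains all the required distinguishing coordinates is implicit in the setting.
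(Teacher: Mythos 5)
Your proposal is correct and is essentially the paper's own argument: the paper likewise picks, for each row $s$ and column $t$, a coordinate $i_t^{(s)}$ where $\phi(x_t)$ and $\phi(y_s)$ differ (your $c_{s,t}$), sets $b_t^{(s)}=\phi(x_t)[i_t^{(s)}]$, and verifies that the diagonal entry satisfies each row for every $x_j$ while row $s$ fails entirely for $y_s$. No substantive difference beyond your more explicit row/column phrasing.
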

\begin{proof}
    Intuitively, for any fixed $y_s$, we want to find indices $i_1^{(s)},\dots, i_k^{(s)}$ and bits $b_1^{(s)},\dots, b_k^{(s)}$ that leave the clause $\bigvee_{t=1}^{k}\phi(y_s)[i_t^{(s)}] = b^{(s)}_t$ unsatisfied, and observe that this is sufficient to show that for arbitrary selection of the remaining indices and bits, $(y_s,I,B)$ is not in $T$. 
    Simultaneously, we want to make sure that for every $x_t$ the clause $\bigvee_{t=1}^{k}\phi(x_t)[i_t^{(s)}] = b^{(s)}_t$ is satisfied, by assuring that $\phi(x_t)[i_t^{(s)}] = b^{(s)}_t$ for every $t$.
    To this end, we exploit the injectiveness of $\phi$.
    In particular, since $\phi$ is injective, we can find an index $i_1^{(1)}$, such that $\phi(x_1)[i_1^{(1)}]\neq \phi(y_1)[i_1^{(1)}]$.
    Similarly, for all $2\leq j \leq k$, find indices $i_2^{(1)},\dots, i_k^{(1)}$, such that $\phi(x_j)[i_j^{(1)}]\neq \phi(y_1)[i_j^{(1)}]$.
    
    Let $b^{(1)}_t = \phi(x_t)[i_t^{(1)}]$ for all $1\leq t \leq k$ and observe that for all $t\in [k]$, $\phi(y_1)[i_t^{(1)}] \neq b^{(1)}_t$, and hence the formula $\bigvee_{t=1}^{k}\phi(y_1)[i_t^{(1)}] = b^{(1)}_t$ is not satisfied.
    On the other hand, clearly for any $j$,
    the formula $\bigvee_{t=1}^{k}\phi(x_j)[i_t^{(1)}] = b^{(1)}_t$ is satisfied since $\phi(x_j)[i_j^{(1)}] = b^{(1)}_j$.
    Similarly, for all $2\leq s \leq k$, one can find indices $i_1^{(s)},\dots, i_k^{(s)}$, such that $\phi(x_j)[i_j^{(s)}]\neq \phi(y_s)[i_j^{(s)}]$ and symmetrically setting $b^{(s)}_t = \phi(x_t)[i_t^{(s)}]$ leaves the formula $\bigvee_{t=1}^{k}\phi(y_s)[i_t^{(s)}] = b^{(s)}_t$ unsatisfied, while satisfying $\bigvee_{t=1}^{k}\phi(x_j)[i_t^{(s)}] = b^{(s)}_t$.
    
    Finally, we observe that consequently the formula $\bigwedge_{s=1}^k \big(\bigvee_{t=1}^k \phi(x_j)[i_t^{(s)}] = b_t^{(s)}\big)$ is satisfied for every $j$ and $\bigwedge_{s=1}^k \big(\bigvee_{t=1}^k \phi(y_j)[i_t^{(s)}] = b_t^{(s)}\big)$ is not satisfied for any $j$. 
    The desired result follows.
\end{proof}
We proceed to use the result from the previous lemma to construct our gadget. 
In particular, the vertices in our gadget will correspond to the indices and the bits as above and we will add an edge between a vertex $x$ from the set $S\subseteq X$ and the vertex $(I,B)$ in the gadget if and only if $(x,I,B)$ is in $T$. 
By the previous lemma, for any selection of at most $k$ positive and at most $k$ negative vertices, we can find a vertex in the gadget that is adjacent to all negative vertices and non-adjacent to all positive vertices, thus appending the gadget vertices to the set $Y$ implies that no complete solution is contained inside $S$. 
\begin{lemma}[Identifier gadget] \label{lemma:identifier-gadget}
    Let $G = (X,Y,E)$ be a bipartite graph and $k$ a fixed constant. 
    If $S\subseteq X$, then one can add $\Tilde{\bigO}(1)$ vertices to $Y$ and connect those vertices to $S$ so that for any $k$ vertices $x_1,\dots, x_k\in S$ and for any $k$ vertices $x'_1,\dots, x'_k\in X$ there exists a vertex $y\in Y$ such that $y\in N(x_1)\cap N(x_2)\cap \dots \cap N(x_k)$ and $y\not\in N(x'_1)\cup \dots \cup N(x'_k)$.
\end{lemma}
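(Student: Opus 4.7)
The plan is to introduce one new vertex $y_{I,B}$ in $Y$ for every index-bit tuple
$$
(I,B) = \bigl(i_1^{(1)},\dots, i_k^{(k)},\, b_1^{(1)},\dots, b_k^{(k)}\bigr)
$$
with $i_t^{(s)}\in [\lceil \log |S|\rceil]$ and $b_t^{(s)}\in \{0,1\}$, and to add an edge between $y_{I,B}$ and a vertex $x\in S$ iff $(x,I,B)\in T$, i.e.\ iff $\bigwedge_{s}\bigvee_{t} \phi(x)[i_t^{(s)}]=b_t^{(s)}$. Crucially, the new vertices are connected only to $S$ and to no other vertex of $X$; this single design choice will convert the previous lemma (which handles "negatives" in $S$) into the desired statement where the negatives range over all of $X$.

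First I would check the size bound. The number of tuples $(I,B)$ is at most $(\log|S|)^{k^2}\cdot 2^{k^2}$, and since $k$ is a fixed constant this is $\tOh(1)$ as a function of $|S|$, so the number of added vertices is indeed polylogarithmic.

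For correctness, fix positives $x_1,\dots,x_k\in S$ and negatives $x'_1,\dots,x'_k\in X$. I would split the negatives by membership in $S$: those in $X\setminus S$ are trivially non-adjacent to every newly added vertex (since new edges go only into $S$), so the only real constraint comes from $S':=\{x'_1,\dots,x'_k\}\cap S$. Pad $S'$ (if necessary) by arbitrary distinct elements of $S\setminus\{x_1,\dots,x_k\}$ to a set $\{y_1,\dots,y_k\}$ of exactly $k$ elements, and apply the preceding lemma to $\{x_1,\dots,x_k,y_1,\dots,y_k\}\subseteq S$ to obtain a tuple $(I,B)$ for which $(x_i,I,B)\in T$ for every $i$ and $(y_j,I,B)\notin T$ for every $j$. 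By construction $y_{I,B}$ is then adjacent to each $x_i$ and non-adjacent to each element of $S'$, while automatically non-adjacent to each $x'_j\in X\setminus S$. Hence $y_{I,B}\in N(x_1)\cap\dots\cap N(x_k)$ and $y_{I,B}\notin N(x'_1)\cup\dots\cup N(x'_k)$, as required.

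The only subtlety I anticipate is the padding step: the preceding lemma is stated for subsets of size exactly $2k$ inside $S$, so one must ensure $|S|\geq 2k$ to perform the padding; for smaller $|S|$ the statement is vacuous or handled directly, and alternatively the proof of the preceding lemma transparently extends to subsets of any size. Everything else is bookkeeping on the construction.
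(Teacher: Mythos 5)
Your proposal is correct and follows essentially the same route as the paper: one new vertex per index--bit tuple $(I,B)$, edges into $S$ only according to membership in $T$, negatives outside $S$ handled trivially because the new vertices are connected only to $S$, and the preceding lemma invoked for the negatives inside $S$. Your padding step (and the remark that the preceding lemma's proof only needs the positives to differ from the negatives, not a full $2k$-element set) is just a slightly more careful bookkeeping of a point the paper glosses over.
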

\begin{proof}
    Let $I = [\lceil \log(|S|) \rceil]^{k^2}$ and let $B = \{0,1\}^{k^2}$. Now define $J_S := I\times B$.
    Observe that $|J_S| = \bigO(2^{k^2}\log^{k^2}n) = \Tilde \bigO (1)$.
    
    Let $Y' = Y\cup J_S$ and add edge between $x\in X$ and $(I,B)\in J_S$ if and only if $(x,I,B)\in T$.
    Now fix any tuple of vertices $x_1,\dots, x_k\in S$. 
    By the previous lemma, for any $x'_1,\dots, x'_k\in S$, we can find a vertex $y\in J_S$ corresponding to some tuple of indices $I$ and bits $B$ such that $y\in N(x_i)$ and $y\not \in N(x'_i)$ for all $i$.
    Observe that since $y$ is adjacent only to the vertices in $S$, by choosing a vertex $x'_i\in X\setminus S$, this property remains preserved.
\end{proof}
Given a bipartite graph $G = (X,Y,E)$ and a subset $S$, the set of $\Tilde{\bigO}(1)$ vertices as described in the last lemma is called \emph{identifier gadget} and is denoted by $J_{S}$.
Constructing a graph $G' = (X, Y\cup J_{S}, E)$ from $G$ by adding the edges as described above will be referred to as \emph{attaching} the identifier gadget to $S$.
If the identifier $J_S$ is attached to $S$, we schematically represent this by adding a directed edge from $S$ to $J_S$. 

We proceed to give the reduction from any Multichromatic Basic Problem to the corresponding Bichromatic Basic Problem and then using the result from the beginning of the section, we conclude that all the Bichromatic Basic Problems are hard under SETH.
\begin{proposition}
    Let $\Phi$ be any basic property.
    Given an instance $G = (V_1,\dots, V_{k}, W)$ of the Multichromatic Basic Problem corresponding to $\Phi$ with $\bigO(n)$ vertices with $|W| = \Tilde{\bigO}(1)$, we can construct in time $\Tilde\bigO(n)$ an equivalent instance $G' = (X,Y,E)$ of the corresponding Bichromatic Basic Problem with $|Y| = \Tilde{\bigO}(|W|)$.
\end{proposition}
\begin{proof}
    Without loss of generality, assume that $\Phi= E(v_1,w)\lor \dots \lor E(v_\ell, w) \lor \overline{E(v_{\ell+1},w)}\lor\dots \lor \overline{E(v_k,w)}$ and let $\ell' := k-\ell$ denote the number of negative literals in $\Phi$.
    Consider the following construction of the graph $G' = (X, Y, E)$.
        For all $S'\subseteq \{V_1,\dots, V_k\}$ of size $k-1$, let $S = \bigcup_{s\in S'}s$ and attach an identifier gadget $J_S$ to $S$.
        If for some $1\leq i\leq \ell$, $V_i\not \in S'$, add all edges between $V_i$ and $J_S$.
        Let $X = V_1\cup \dots \cup V_{k}$ and let $Y$ consist of $W$ and all the gadgets $J_S$ as constructed above.
        This reduction is depicted schematically on Figure \ref{fig:multi-bi}.

    \begin{claim}
        If there are vertices $x_1,\dots, x_k\in X$ such that $N(x_{\ell+1})\cap \dots \cap N(x_k) \subseteq N(x_1) \cup \dots \cup N(x_\ell)$, then each of the sets $V_1,\dots, V_k$ contains exactly one of the vertices $x_1,\dots, x_k$.
    \end{claim}
    \begin{proof}
        Assume that for some $\ell+1\leq i\leq k$ the set $V_i$ contains none of the vertices $x_1,\dots, x_k$. 
        Observe that the identifier gadget $J_{X\setminus V_i}$ is attached to the whole set $X$ except $V_i$, hence contradicting Lemma \ref{lemma:identifier-gadget}.

        Assume now that for some $1\leq i\leq \ell$, $V_i$ contains none of the vertices $x_1,\dots, x_k$.
        By Lemma \ref{lemma:identifier-gadget}, there exists a vertex $v\in J_{X\setminus V_i}$ adjacent to all $x_{\ell+1},\dots, x_{k}$ and to no other vertex in $X\setminus V_i$.
        In particular, it is non-adjacent to $x_j$ for any $j\in [\ell]$, yielding a contradiction.
        Hence, each set $V_i$ contains at least one vertex and since we are placing $k$ vertices in $k$ sets, the upper bound is also trivially satisfied.
    \end{proof}
    \begin{claim}
         If there are vertices $x_1,\dots, x_k \in X$ such that $N(x_{\ell + 1})\cap \dots \cap N(x_k) \subseteq N(x_1) \cup \dots \cup N(x_\ell)$, then for every $1\leq i \leq \ell$, there is a $1\leq j\leq \ell$ such that $x_i\in V_j$.
    \end{claim}
    \begin{proof}
        Assume for contradiction that there exists some index $i$ such that $x_i\in X\setminus \bigcup_{1\leq j\leq \ell} V_j$.
        Then by the pigeonhole principle, there exists a set $V_j$ (for $1\leq j \leq \ell$) that does not contain any of the vertices $x_1,\dots, x_\ell$ and by the previous claim, $V_j$ contains $x_k$ (without loss of generality).

        Consider the identifier gadget $J_{X\setminus V_j}$. 
        By construction, every vertex in this gadget is adjacent to $x_k$.
        Hence, by the assumption that $N(x_{\ell + 1})\cap \dots \cap N(x_k) \subseteq N(x_1) \cup \dots \cup N(x_\ell)$, it must also hold that $N(x_{\ell+1})\cap \dots \cap N(x_{k-1})\cap J_{X\setminus V_j} \subseteq N(x_1) \cup \dots \cup N(x_\ell)$, thus contradicting Lemma \ref{lemma:identifier-gadget}.
    \end{proof}

    Now assume $V_1,\dots, V_k, W$ is a yes-instance of the multichromatic basic problem. 
    Namely, we can find $v_1\in V_1, \dots, v_k \in V_k$, such that for every $w\in W$, the formula $E(v_1,w) \lor \dots \lor E(v_{\ell},w)\lor \overline{E(v_{\ell+1},w)}\lor \dots \lor \overline{E(v_{k},w)}$ is satisfied. 
    We claim that in $G'$,
    for every $y\in Y$, the formula $E(v_1,y) \lor \dots \lor E(v_{\ell},y)\lor \overline{E(v_{\ell + 1},y)}\lor \dots \lor \overline{E(v_{k},y)}$ is satisfied. 
    Let $v\in Y$ be arbitrary.
    If $v\in W$, then it is trivially satisfied.
    Consider now $v\in Y\setminus W$. 
    Clearly $v\in J_{X\setminus V_i}$ for some $i$.
    If $1\leq i\leq \ell$, then by construction, all edges between $V_i$ and $J_{X\setminus V_i}$ are present and in particular $v$ is adjacent to the positive vertex $v_i$.
    On the other hand if $\ell+1\leq i \leq k$, then there are no edges between $V_i$ and $J_{X\setminus V_i}$ and in particular $v$ is nonadjacent to the negative vertex $v_i$.
    In both cases, $v$ is satisfied.
    
    Conversely, assume that there are vertices $x_1,\dots, x_k\in X$ such that for every $y$ in $Y$ the formula $E(x_1,y) \lor \dots \lor E(x_{\ell},y)\lor \overline{E(x_{\ell+1},y)}\lor \dots \lor \overline{E(x_k,y)}$ is satisfied.
    By the two claims above (without loss of generality), $x_1\in V_1, \dots, x_k\in V_k$.
    And now it is sufficient to observe that since $W\subseteq Y$ and every vertex in $Y$ is satisfied by $x_1,\dots, x_k$, every vertex in $W$ is also trivially satisfied by $x_1,\dots, x_k$.
\end{proof}
\begin{figure}
    \centering
    \includegraphics[scale=0.6]{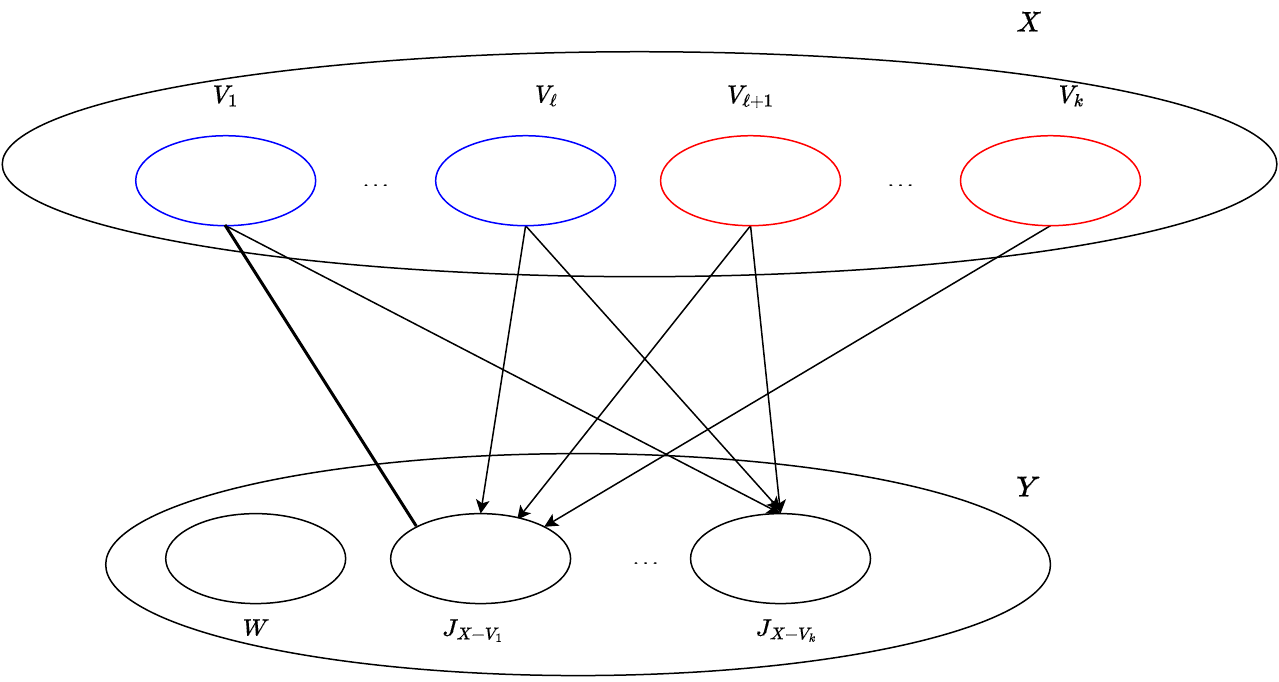}
    \caption{Schematical representation of the reduction from Multichromatic Basic Problem to Bichromatic Basic Problem. Blue sets contain positive vertices, red sets contain negative vertices, bold edges represent bicliques between the corresponding sets and directed edge from set $V_i$ to the identifier gadget $J_S$ represents attaching the gadget to the set. The edges from $V_i$ to $W$ are inherited from the multichromatic instance.}
    \label{fig:multi-bi}
\end{figure}
The following proposition now follows directly.
\begin{proposition}
    Let $\Phi$ be a basic property with at least one negative literal. 
    For any $\varepsilon>0$, the Bichromatic Basic Problem corresponding to $\Phi$ cannot be solved faster than $\bigO(n^{k-\varepsilon})$ even on graphs with $\Tilde \bigO(n)$ edges, unless SETH fails.
\end{proposition}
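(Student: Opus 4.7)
The plan is to combine the previous proposition (which reduces the polychromatic basic problem to the bichromatic one while preserving sparsity up to polylogarithmic factors) with the standard $k$-OV hardness of the polychromatic basic problem on sparse instances. First I would recall the standard reduction from $k$-OV to the polychromatic basic problem for $\Phi$: given an instance $A_1,\dots,A_k$ of $k$-OV with dimension $d=\bigO(\log n)$, build $G=(V_1,\dots,V_k,W)$ where $V_i$ contains a vertex $a^i_j$ for each vector in $A_i$ and $W$ contains a vertex $w_t$ for each coordinate $t\in[d]$. For each literal of $\Phi$ decide edges independently: if the $i$-th literal is positive $E(v_i,w)$, add $a^i_jw_t$ iff $a^i_j[t]=0$; if it is negative $\overline{E(v_i,w)}$, add $a^i_jw_t$ iff $a^i_j[t]=1$. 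Under this encoding, each literal of $\Phi$ evaluates to true at $(a^i_j,w_t)$ precisely when the $i$-th chosen vector has coordinate $0$ at position $t$, so $\Phi$ holds at every $w_t\in W$ iff the selected vectors are orthogonal.

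Next I would note that this graph has $|V_1|+\dots+|V_k|+|W|=\bigO(n)$ vertices and at most $knd=\tilde\bigO(n)$ edges, so it is a sparse polychromatic instance. Applying the previous proposition yields, in $\tilde\bigO(m)=\tilde\bigO(n)$ time, an equivalent bichromatic instance $G'=(X,Y,E)$ with $\tilde\bigO(n)$ edges, such that $G$ is a yes-instance of the polychromatic basic problem for $\Phi$ iff $G'$ is a yes-instance of the bichromatic basic problem for $\Phi$.

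Finally, suppose toward contradiction that the bichromatic basic problem for $\Phi$ admits an $\bigO(n^{k-\varepsilon})$-time algorithm on $\tilde\bigO(n)$-edge graphs for some $\varepsilon>0$. Chaining it with the two reductions above gives a $\tilde\bigO(n^{k-\varepsilon})$-time algorithm for $k$-OV with $d=\bigO(\log n)$, contradicting the $k$-OV hypothesis and hence SETH. The only technical point to double-check is that the polychromatic reduction handles positive and negative literals uniformly (which it does, by flipping the adjacency rule for the negative positions) and that the $\tilde\bigO(\cdot)$ blowup from the previous proposition, together with the $\log$-factor from $d=\bigO(\log n)$, is absorbed into the $n^{\varepsilon}$ slack; there is no genuine obstacle beyond bookkeeping.
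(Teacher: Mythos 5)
Your proposal is correct and takes essentially the same route as the paper: the paper states that the proposition ``follows directly'' from the preceding polychromatic-to-bichromatic reduction, tacitly relying on exactly the $k$-OV hardness of the sparse polychromatic basic problem that you spell out (positive literals encoded by edges on coordinate $0$, negative literals by edges on coordinate $1$, giving $\tilde\bigO(n)$ edges for $d=\bigO(\log n)$). Making that implicit ingredient explicit and checking the polylogarithmic overhead is all that is needed, so there is nothing to fix.
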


\subsection{Bichromatic to Monochromatic}
We now proceed to prove the hardness of Monochromatic Basic Problems in sparse graphs with at least one negative literal in the corresponding basic property by reducing from the corresponding Bichromatic Problem.

\begin{lemma}\label{lemma:monochromatic-gadget}
    Given any positive integer $N$ and a constant $k$, there exists a graph $G$ with at most $\Tilde{\bigO}(N)$ vertices satisfying the following conditions. 
    \begin{itemize}
        \item Every vertex $v$ is labelled by an integer $\lambda(v)\in \{1,\dots, N\}$.
        \item For all $a_1,\dots, a_k, b_1,\dots, b_k\in V(G)$ and for every $i\in 1,\dots, N$, there exists a vertex $x\in V(G)\setminus \{a_1,\dots, a_k,b_1,\dots, b_k\}$ such that $\lambda(x) = i$ and $x$ is adjacent to all $a_j$ and nonadjacent to all $b_j$ for $j\in [k]$. 
        \item We can compute $G$ in deterministic polynomial time in $N$.
    \end{itemize} 
\end{lemma}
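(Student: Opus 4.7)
The plan is to prove this by a probabilistic construction, showing that a randomly labelled random graph of the right size has the required property with positive probability.

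First I would set $M = N$ (which satisfies $N \leq M \leq \mathcal{O}(N)$), take the vertex set to be $[M]\times [N^{c}]$ for a constant $c = c(k)$ to be fixed later, and define the label of a vertex $(i,j)$ to be $i$, so that each label class has exactly $N^{c}$ vertices. The edge set is then generated by including each unordered pair of vertices independently with probability $\tfrac{1}{2}$.

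Next I would carry out the following union bound. Fix an arbitrary tuple $(a_1,\dots,a_k,b_1,\dots,b_k)$ of vertices of $G$ and a label $i \in [M]$. Of the $N^{c}$ vertices of label $i$, at least $N^{c}-2k$ lie outside the tuple, and for each such candidate $x$ the $2k$ events "$x$ is adjacent to $a_j$" and "$x$ is non-adjacent to $b_j$" are independent Bernoulli($1/2$) events, so $x$ satisfies the required adjacency pattern with probability exactly $2^{-2k}$. The probability that no such $x$ exists is therefore at most $(1-2^{-2k})^{N^{c}-2k} \leq \exp\!\big(-2^{-2k}(N^{c}-2k)\big)$. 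Taking a union bound over the at most $N^{(2k+1)c+1}$ relevant tuple-label pairs, the total failure probability is bounded by $N^{(2k+1)c+1}\cdot \exp\!\big(-2^{-2k}(N^{c}-2k)\big)$, which is strictly less than $1$ for any fixed $c\geq 1$ once $N$ exceeds a threshold $N_{0}(k)$. This gives $|V(G)| = N^{c+1}$ and hence $f(k)=c+1$, e.g.\ $f(k)=2$, works in the asymptotic regime.

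For the finitely many residual values $N < N_{0}(k)$, I would obtain the required graph by brute force: the property is a first-order statement over a graph whose size is bounded by a constant depending only on $k$, and such a graph can be found by exhaustive search, with any surplus vertices absorbed into the constant prefactor by enlarging $f(k)$ slightly if necessary. The main obstacle I anticipate is not the probabilistic analysis itself, which is routine, but the bookkeeping needed to reconcile the uniform bound $|V(G)|\leq N^{f(k)}$ with the union-bound analysis as $N$ varies; this is exactly why one must split the argument into a "bulk" regime handled by random construction and a small-$N$ regime handled by brute force. A minor subtlety is that the $2k$ adjacency events above are genuinely independent only because the candidate $x$ is distinct from all tuple elements, which is why one must separate the at-most-$2k$ vertices of label $i$ inside the tuple from the remaining $N^{c}-2k$ true candidates.
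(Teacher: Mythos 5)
Your probabilistic argument is essentially sound and does establish the lemma as stated, but it is a genuinely different route from the paper's. The paper gives an explicit algebraic construction: vertices are the polynomials of degree at most $d=\lceil\log_p N\rceil+k$ over $\F_p$ for a prime $p=\Theta\bigl(k^2(\log N+k)\bigr)$, the label of a polynomial is read off from its values at the points $1,\dots,L$, and two polynomials are adjacent iff they agree at some point in $\{L+1,\dots,R\}$; the second bullet then follows from the fact that two distinct degree-$\le d$ polynomials agree in at most $d$ points, together with a Lagrange-interpolation count of the polynomials through prescribed values. What that buys over your random graph is (i) a deterministic, explicitly computable gadget, which is what the subsequent proposition actually needs, since the gadget has to be built inside a reduction running in time $\tilde{\bigO}(n)$ and producing $\tilde{\bigO}(n)$ edges, and (ii) a much smaller gadget, $\tilde{\bigO}(N)$ vertices rather than your $N^{c+1}\ge N^2$; your construction meets the letter of the lemma (only $N^{f(k)}$ vertices are required) but would not support the paper's later size and running-time claims. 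What your approach buys is simplicity: the union bound is routine, whereas the algebraic construction requires choosing $p$, $d$, $L$, $R$ so that the counting goes through. Two minor points: your count of tuple--label pairs should be $N^{2k(c+1)+1}$ rather than $N^{(2k+1)c+1}$ (immaterial, since the exponential term beats any fixed polynomial), and your per-candidate success probability $2^{-2k}$ tacitly assumes $\{a_1,\dots,a_k\}\cap\{b_1,\dots,b_k\}=\emptyset$ -- but the lemma is false without that assumption and the paper's proof needs it just as implicitly, so this is a defect of the statement rather than of your argument. Also note the brute-force treatment of small $N$ is not really needed for a pure existence statement: for $N<N_0(k)$ you can simply enlarge the label classes and rerun the same union bound, keeping $|V|\le N^{f(k)}$ for all $N\ge 2$.
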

Before proving this lemma, we first show how one can use this result to obtain a reduction from a sparse Bichromatic Basic Problem to the corresponding Monochromatic Basic Problem.
\begin{proposition}
    Let $\Phi$ be a basic property with $k$ variables and at least one negative literal.
    Given an instance $G = (X, Y, E)$ of the Bichromatic Basic Problem corresponding to $\Phi$ with $n$ vertices and assume $Y=\Tilde{\bigO}(1)$, we can construct an equivalent instance $G' = (V,E)$ of the corresponding Monochromatic Basic Problem in time $\Tilde\bigO(n)$ consisting of at most $\Tilde{\bigO}(n)$ edges.
\end{proposition}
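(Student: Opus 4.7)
The plan is to construct $G'=(V,E')$ by setting $V = X \cup Y \cup U$, where $U$ is a copy of the monochromatic gadget supplied by Lemma~\ref{lemma:monochromatic-gadget} instantiated with a small parameter $N = \tilde{\mathcal{O}}(1)$, so that $|U| = N^{f(k)} = \tilde{\mathcal{O}}(1)$. All of the bipartite edges of $E$ are kept inside $X \cup Y$. Without loss of generality I permute the literals of $\Phi$ so that the last one $F_k$ is a negative literal; this literal fires whenever two vertices are non-adjacent and, because $G$ is bipartite, it already ``handles'' the case $v \in X$ essentially for free. Edges between $X\cup Y$ and $U$ are introduced through an identifier/fingerprint scheme analogous to the one in Lemma~\ref{lemma:identifier-gadget}, so that each vertex of $X\cup Y$ is incident to only $\tilde{\mathcal{O}}(1)$ gadget vertices. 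Labels of $U$ are used to encode which vertex of $X\cup Y$ a gadget vertex is ``referring to''.

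For the forward direction, I claim that any bichromatic solution $x_1,\dots,x_k \in X$ is also a monochromatic solution in $G'$. For $v \in Y$ this is the bichromatic hypothesis itself; for $v \in X$ the bipartite structure gives $v \not\sim x_j$ for every $j$ so the negative literal $F_k$ is satisfied; for $v \in U$ I calibrate the $U$-to-$X$ edges so that either a positive literal fires or $F_k$ fires by default on the chosen $x_j$ (for instance, by leaving $U$ essentially non-adjacent to $X$, or, more delicately, by routing adjacencies through the identifier encoding).

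For the backward direction, assume $x_1,\dots,x_k \in V$ is a monochromatic solution and, for contradiction, that some $x_i \notin X$. I must exhibit a refuter $v^* \in V$ at which the whole formula $\bigvee_j F_j(x_j, v^*)$ fails, i.e.\ $v^* \sim x_j$ for every negative $F_j$ and $v^* \not\sim x_j$ for every positive $F_j$. This is precisely the universal refuter property guaranteed by Lemma~\ref{lemma:monochromatic-gadget}: once the labels are used to route the selected vertices $x_1,\dots,x_k$ through the gadget, the lemma yields $v^* \in U$ with exactly the prescribed adjacency pattern, no matter whether $x_i$ came from $Y$ or from $U$. Thus every $x_i$ must lie in $X$, and restricting the monochromatic condition to $v \in Y$ recovers a genuine bichromatic solution.

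The main obstacle is the sparsity budget: a naive encoding that attaches every gadget vertex to every vertex of $X\cup Y$ would spend $\Theta(n \cdot |U|)$ edges, overshooting the $\tilde{\mathcal{O}}(n)$ target. The workaround is to exploit the labels of $U$ as compact identifiers — spending only $\tilde{\mathcal{O}}(1)$ edges per vertex of $X\cup Y$ — and to rely on the universality of Lemma~\ref{lemma:monochromatic-gadget} to synthesize refuters for every conceivable ``bad'' selection. Verifying that this sparse encoding leaves no case uncovered, in particular when the bad selection mixes vertices from $X$, $Y$, and $U$ simultaneously, is the delicate bookkeeping that completes the proof; the time bound of $\tilde{\mathcal{O}}(n)$ then follows because the gadget itself has $\tilde{\mathcal{O}}(1)$ vertices and the attachment rule touches each original vertex only polylogarithmically often.
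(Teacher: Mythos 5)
Your construction diverges from the paper's in a way that leaves a genuine hole. The paper does \emph{not} keep $Y$ in the new graph at all: it sets $V(G') = V(H)\cup X$, where $H$ is the gadget of Lemma~\ref{lemma:monochromatic-gadget} instantiated with $N = |Y|$ (so $|V(H)| = \tilde{\bigO}(|Y|)$, not $\tilde{\bigO}(1)$), and the rows of $Y$ are absorbed into the \emph{labels} of $H$. An edge $x\sim h$ is added in $G'$ precisely when $x$ is adjacent in $G$ to the unique $Y$-vertex carrying $h$'s label, so the entire $X$--$Y$ bipartite structure is re-expressed through the gadget. By contrast you keep $X\cup Y$ intact and tack on a separate $\tilde{\bigO}(1)$-sized $U$. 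That is a genuinely different architecture, and it breaks at the point where you invoke Lemma~\ref{lemma:monochromatic-gadget} to refute a bad selection with some $x_i \in Y$: the lemma's universal-pattern guarantee quantifies over $a_1,\dots,a_k,b_1,\dots,b_k \in V(H)$, i.e.\ over vertices \emph{inside} the gadget whose adjacency is dictated by polynomial agreement, and says nothing about a vertex of $Y$ whose edges to $U$ you chose by some ad hoc fingerprint. Your sentence ``the lemma yields $v^*\in U$ with exactly the prescribed adjacency pattern, no matter whether $x_i$ came from $Y$ or from $U$'' asserts a property the lemma does not have. Concretely, if you also keep $U$ essentially non-adjacent to $X$ (as you suggest), then whenever the negative literal sits on some $x_j\in X$ a refuter $v^*\in U$ would need $v^*\sim x_j$, which is impossible; so selections that mix one $Y$-vertex with $X$-vertices cannot be refuted from inside $U$.

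There is also a structural over-reach in your backward direction. You aim to prove that every $x_i$ must lie in $X$, but the paper never establishes (and does not need) that rigidity. Its backward argument allows some of the chosen $x_i$ to sit in $V(H)$; it then picks an arbitrary $y\in Y$ with label $i$, uses Lemma~\ref{lemma:monochromatic-gadget} to fabricate $v\in V(H)$ with label $i$ that is non-adjacent to the in-gadget positive-literal choices and adjacent to the in-gadget negative-literal choices, and deduces that one of the \emph{remaining} $x_j\in X$ must satisfy a literal at $v$, hence at $y$ in $G$ via the label correspondence. This ``neutralize-the-gadget-choices, read off the $X$-choices'' step is the mechanism that makes the reduction work, and it is exactly what is missing from your proposal: you have no analogous device for translating a literal fired at a gadget vertex back to a literal fired at a $Y$-vertex of the original bichromatic instance, because in your graph $Y$ and $U$ coexist with no label coupling between them.
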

\begin{proof}

    Without loss of generality, assume that ${\Phi= E(x_1,y)\lor \dots \lor E(x_\ell, y) \lor \overline{E(x_{\ell+1},y)}\lor \dots \lor \overline{E(x_k,y)}}$, for some $\ell<k$, and let $k-\ell = \ell'\geq 1$.
    Consider the graph $G'$ constructed as follows.

        Let $N = |Y|$ and let $H$ be a graph satisfying the properties of Lemma \ref{lemma:monochromatic-gadget}. Without loss of generality, assume that $Y = \{y_1,\dots, y_N\}$ and label the vertex $y_i$ by $i$.        
        Let $V(G') = V(H) \cup X$. Let $H \cong G[V(H)]$, and add an edge between $x\in X$ and $h\in H$ if and only if $\lambda(h) = i$ and there is an edge between $x$ and a vertex in $Y$ labeled $i$ in $G$. 
    
    First assume $G$ is a yes-instance of the Bichromatic Basic Problem. 
    Namely, assume that we can find $x_1, \dots, x_k\in X$ such that for every $y\in Y$, the formula $E(x_1,y) \lor \dots \lor E(x_{\ell},y)\lor \overline{E(x_{\ell + 1},y)}\lor \dots \lor \overline{E(x_{k},y)}$ is satisfied.
    We claim that also $\forall v\in V(G')$, it holds that $ E(x_1,v) \lor \dots \lor E(x_{\ell},v)\lor \overline{E(x_{\ell + 1},v)}\lor \dots \lor \overline{E(x_{k},v)}$. 
    To see this assume first that $v\in X$. 
    Since there are no edges inside $X$, clearly each vertex is nonadjacent to e.g. $x_{\ell + 1}$ hence satisfying the desired formula. 
    
    Now assume that $v\in V(H)$. 
    Then $\lambda(v) = i$ for some $1\leq i\leq N$. 
    Consider the vertex $y$ in $Y\subseteq V(G)$ that is labelled $i$. 
    Since $G$ is a yes-instance, this vertex is either adjacent to $x_j$ for some $j\in [\ell]$, or nonadjacent to $x_j$ for some $\ell+1\leq j\leq k$.
    Now, since $v$ and $y$ have the same label in the respective graph, by construction of $G'$, if $E(x_j,y)$, then also $E(x_j, v)$ and similarly if $\overline{E(x_j,y)}$, then $\overline{E(x_j,v)}$. 
    Therefore the desired formula is satisfied for $v$ as well.
    We have thus covered all the vertices from $V(G')$.

    On the other hand, assume $G$ is a no-instance of the Bichromatic Basic Problem. 
    In particular, assume that for every selection of $x_1,\dots, x_k$, there exists $y\in Y$ that is non-adjacent to all $x_1,\dots, x_\ell$ and adjacent to all $x_{\ell+1},\dots, x_k$.
    Consider any collection of vertices $v_1,\dots, v_k\in V(G')$. 
    Let $S_X$ be the subset of $v_1,\dots, v_k$ consisting of all vertices $v_j$ contained in $X$.
    Clearly, we can find a vertex $y_i\in Y$ that is not satisfied by $S_X$.
    By construction of $G'$ it follows that any vertex $w\in V(H)$ with $\lambda(w) = i$ is not satisfied by $S_X$.
    
    Moreover, by Lemma \ref{lemma:monochromatic-gadget}, there exists a vertex $w\in V(H)$ with $\lambda(w) = i$ such that $w$ is not satisfied by any vertex $v_i$ in $\{v_1,\dots,v_k\}\setminus S_X$.
    In particular, it follows that such $w$ is not satisfied by $v_1,\dots, v_k$ and since the vertices $v_1,\dots, v_k$ were selected arbitrarily, $G'$ is also a no-instance.
\end{proof}

We now proceed to prove the Lemma \ref{lemma:monochromatic-gadget}. 
But before that, we state a few basic well known facts about polynomials that will be useful.
\begin{lemma} \label{lemma:polynomials-f-g}
    For any two polynomials $f,g$ with coefficients in a field $\F$ of degree at most $d$, there are at most $d$ points $x$ in $\F$ such that $f(x) = g(x)$.
\end{lemma}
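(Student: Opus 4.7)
The plan is to reduce the claim to the standard fact that a nonzero polynomial of degree at most $d$ over a field has at most $d$ roots, via the obvious trick of considering the difference polynomial. Strictly speaking, the statement as written is vacuous or false when $f=g$ (every point of $\F$ is then an agreement point), so I will implicitly treat the claim in its nontrivial form, i.e., when $f\neq g$; this is the only form in which it will be applied later.

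First I would set $h := f-g$, which is again a polynomial over $\F$ of degree at most $d$, and observe that $\{x\in\F : f(x)=g(x)\} = \{x\in\F : h(x)=0\}$. If $h$ is the zero polynomial then $f=g$ and the statement is to be read as having nothing to prove; otherwise $h$ is a nonzero polynomial of degree $d'\le d$, and it suffices to show that such an $h$ has at most $d'$ roots in $\F$.

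For the root-count bound I would invoke the factor theorem and a short induction on the degree. The base case $d'=0$ is immediate: a nonzero constant polynomial has no roots. For the inductive step, if $a\in\F$ is a root of $h$, then polynomial division by $x-a$ (valid over any field, since $x-a$ is monic) yields $h(x)=(x-a)\,q(x)$ with $\deg q = d'-1$. Any other root $b\neq a$ of $h$ satisfies $(b-a)q(b)=0$ in $\F$, and since $\F$ has no zero divisors we conclude $q(b)=0$. By the inductive hypothesis $q$ has at most $d'-1$ roots, so $h$ has at most $d'\le d$ roots, as desired.

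There is no real obstacle here; the only subtlety is the degenerate case $f=g$ flagged above, and the implicit use of the fact that $\F$ is an integral domain (which follows from it being a field). No deeper machinery is needed.
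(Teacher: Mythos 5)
Your proof is correct and its core move is the same as the paper's: form the difference $h=f-g$ and bound the number of its roots by its degree. The difference is in how that root bound is justified. The paper simply asserts that a degree-$\le d$ polynomial with more than $d$ roots contradicts ``the Fundamental Theorem of Algebra'' (with a citation), which is a misattribution: the Fundamental Theorem of Algebra concerns existence of complex roots and says nothing about arbitrary fields, and the lemma is later applied over the finite field $\F_p$ in the proof of Lemma \ref{lemma:monochromatic-gadget}, where that theorem does not even apply. The fact actually needed is the one you prove: over any field (or integral domain), a nonzero polynomial of degree $d'$ has at most $d'$ roots, via the factor theorem and induction, using the absence of zero divisors. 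So your version is self-contained and, strictly speaking, more appropriate for the setting in which the lemma is used. You also flag the degenerate case $f=g$, where the statement as literally phrased fails (every point is an agreement point); the paper ignores this, and like you it implicitly intends $f\neq g$, which is how the lemma is invoked. No gaps in your argument.
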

\begin{lemma} \label{lemma:interpolation-uniqueness}
    Given $d+1$ distinct points $x_1,\dots, x_{d+1}\in \F$ and $d+1$ values $y_1,\dots, y_{d+1}\in \F$, there is a unique polynomial $f\in \F[X]_{\leq d}$ satisfying $f(x_i) = y_i$ for all $i$.
\end{lemma}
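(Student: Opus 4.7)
The plan is to prove existence and uniqueness separately, with uniqueness falling out immediately from the previous Lemma \ref{lemma:polynomials-f-g}.

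For existence, I would exhibit the Lagrange interpolation polynomial explicitly. Define, for each $i \in \{1,\dots,d+1\}$, the basis polynomial
$$\ell_i(X) = \prod_{j \neq i} \frac{X - x_j}{x_i - x_j},$$
which is well-defined because the $x_j$ are distinct (so the denominators lie in $\F^\times$). Each $\ell_i$ has degree exactly $d$, satisfies $\ell_i(x_i) = 1$, and vanishes on every other $x_j$. Then set
$$f(X) = \sum_{i=1}^{d+1} y_i \, \ell_i(X).$$
This is a sum of polynomials of degree at most $d$, hence lies in $\F[X]_{\leq d}$, and evaluating at $x_k$ gives $f(x_k) = \sum_i y_i \ell_i(x_k) = y_k$ as required.

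For uniqueness, suppose $f,g \in \F[X]_{\leq d}$ both satisfy $f(x_i) = g(x_i) = y_i$ for all $i \in \{1,\dots,d+1\}$. Then $f$ and $g$ agree on $d+1$ distinct points. By Lemma \ref{lemma:polynomials-f-g}, any two polynomials of degree at most $d$ can agree on at most $d$ points unless they are identical, so $f = g$.

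No serious obstacle is anticipated: the existence argument is entirely constructive and the uniqueness argument is a one-line appeal to the preceding lemma. The only subtlety worth flagging is that the lemma cited is stated over a field $\F$ (invoking the Fundamental Theorem of Algebra in its proof), and the Lagrange construction likewise requires that $x_i - x_j$ be invertible for $i \neq j$, both of which are immediate in the field setting assumed here.
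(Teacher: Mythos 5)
Your proof is correct and follows essentially the same approach as the paper: existence via the explicit Lagrange interpolation polynomial, and uniqueness via an immediate appeal to Lemma \ref{lemma:polynomials-f-g}. The only cosmetic difference is that you factor the construction through named basis polynomials $\ell_i$, while the paper writes the sum in a single display.
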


\begin{proof}[Proof (of Lemma \ref{lemma:monochromatic-gadget}).]
    Let $p$ be the smallest prime number strictly larger than $k^2(\log(N) + k + 2)$ and let $\F_p$ denote the finite field of order $p$.
    Let $d = \lceil \log_p(N)\rceil + k$. 
    Let ${V(G) = \F_p[X]_{\leq d}}$, that is each vertex in $G$ corresponds to a polynomial with coefficients in $\F_p$ of degree at most $d$.
    Let $L = \lceil \log_p(N)\rceil$ and for $f\in V(G)$ let $\lambda(f) = i$ if and only if $\psi(f(1), \dots, f(L)) = i$ be the labeling of the vertices in $G$ for a bijective function $\psi:\F_p^L \to [p^L]$ (if $\psi(f(1),\dots, f(L))>N$, then we relabel $f$ to $N$).
    We add the edges to our graph as follows.
    Let $R = L + dk + k + 1$ and for any $f,g\in V(G)$, let $\{f,g\}\in E(G)$ if and only if there exists an $x$ such that $L < x\leq R$ and $f(x) = g(x)$. 
    We observe that $p>R$, hence the elements $1,\dots, R$ are all distinct in $\F_p$. 

    We proceed to show that given any distinct polynomials $a_1,\dots, a_k$, $b_1,\dots, b_k$ and any label $i$, we can find a polynomial $f$ with $\lambda(f) = i$ such that for all $j\in [k]$ there exists $L+1\leq x \leq R$ satisfying $a_j(x) = f(x)$, and for no $j\in [k]$ is there an $L+1\leq x\leq R$ satisfying $b_j(x) = f(x)$.
    In particular, note that the vertex corresponding to such polynomial would be adjacent to all the vertices corresponding to $a_j$ and non-adjacent to all vertices corresponding to $b_j$ and would thus yield the desired property.
    
    \begin{claim}
        For any fixed set of polynomials $\{a,b_1,\dots, b_k\}\subseteq V(G)$ and a $k-1$ element set $A\subseteq \{L+1,\dots, R\}$ there exists a point $x\in \{L+1,\dots, R\} \setminus A$ such that $a(x)\not \in \{b_1(x),\dots, b_k(x)\}$
    \end{claim}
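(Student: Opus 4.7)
The plan is to apply a straightforward counting argument built on Lemma \ref{lemma:polynomials-f-g}. First I would observe that since $\{a, b_1, \ldots, b_k\}$ is presented as a set, the difference $a - b_i$ is a nonzero polynomial of degree at most $d$ for each $i \in \{1, \ldots, k\}$. Applying Lemma \ref{lemma:polynomials-f-g} to each pair $(a, b_i)$ yields that the set of $x \in \F_p$ with $a(x) = b_i(x)$ has size at most $d$, so taking the union over $i$ the total number of ``bad'' points in $\F_p$ is at most $kd$.

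Next I would count the candidate points. By the choice $p > R$ (which follows from $p > k^2(\log(N) + k + 2)$ and the explicit value of $R = L + dk + k + 1$ once one checks the size of the constants), the integers $1, 2, \ldots, R$ all correspond to distinct elements of $\F_p$, so in particular $\{L+1, \ldots, R\}$ has exactly $R - L = dk + k + 1$ elements. Removing the $k-1$ forbidden points in $A$ leaves $dk + k + 1 - (k-1) = dk + 2$ candidate points in $\{L+1, \ldots, R\} \setminus A$.

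Finally, I would conclude by pigeonhole: since $dk + 2 > kd$, the $dk + 2$ candidate points cannot all lie in the bad set of size at most $kd$, so at least one $x \in \{L+1, \ldots, R\} \setminus A$ satisfies $a(x) \neq b_i(x)$ for every $i$, i.e.\ $a(x) \notin \{b_1(x), \ldots, b_k(x)\}$, as required.

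I do not expect a serious obstacle here; the entire argument is a counting reduction to Lemma \ref{lemma:polynomials-f-g}. The only place demanding a moment of care is verifying that the parameters $p$, $d$, and $R$ were picked so that $\{1,\ldots,R\}$ injects into $\F_p$ and so that the slack $R - L - (k-1) - kd = 2$ is strictly positive; both are immediate from the definitions at the start of the proof of Lemma \ref{lemma:monochromatic-gadget}.
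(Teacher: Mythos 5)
Your proof is correct and essentially identical to the paper's argument: both bound the ``bad'' points by $dk$ via Lemma~\ref{lemma:polynomials-f-g}, count $|\{L+1,\dots,R\}|=dk+k+1$, subtract $|A|=k-1$, and conclude by pigeonhole. (Your arithmetic is slightly more careful than the paper's, which writes $R-(L+1)-dk=k$ rather than $R-L-dk=k+1$, but the off-by-one is harmless in either version.)
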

    \begin{proof}
        By Lemma \ref{lemma:polynomials-f-g}, $a$ agrees with $b_i$ in at most $d$ points.
        Hence, in total, there are at most $dk$ points $x$ such that $a(x)\in \{b_1(x), \dots, b_k(x)\}$.
        
        Consequently, there are at least $R-(L+1)-dk = k$ points $x$ such that $a(x)\not\in \{b_1(x),\dots, b_k(x)\}$ thus proving the desired.
    \end{proof}
    
    Let $a_1,\dots, a_k, b_1,\dots, b_k$ be arbitrary vertices.
    By the claim above, we can find distinct elements $x_1,\dots, x_k$ such that $a_i(x_i) \not \in \{b_1(x_i),\dots, b_k(x_i)\}$ for any $1\leq i \leq k$.
    Let $\ell$ be an arbitrary fixed label such that $\psi^{-1}(\ell) = (\ell_1,\dots, \ell_L)$.
    Consider the set $S$ consisting of polynomials $f$ satisfying $f(1) = \ell_1, \dots, f(L) = \ell_L$ and $f(x_i) = a_i(x_i)$ for every $1\leq i \leq k$.
    Since $d = L+k$, by Lemma \ref{lemma:interpolation-uniqueness}, if we fix a point $t\in \{L+1,\dots, R\}\setminus\{x_1,\dots, x_k\}$, there is a unique polynomial $f$ in $S$, such that $f(t) = b_j(t)$ for any $1\leq j\leq k$.
    Hence, the set 
    $$T = \{f\in S \mid \exists L+1\leq t\leq R \; \exists 1\leq j\leq k \text{ such that } f(t) = b_j(t) \}$$ 
    contains at most $(R-L-k-1)k$ elements.

    Similarly, by Lemma \ref{lemma:interpolation-uniqueness}, there are at least $p$ distinct polynomials in $S$.
    Notice that it is sufficient to show that the set $S\setminus T$ is non-empty, as this would prove the existence of a vertex adjacent to all $a_i$ and non-adjacent to all $b_j$ for all $i,j$.
    To this end, we compute the value of
    $$|S\setminus T|\geq p - (R-L-k-1)k > p - R > 0.$$

    Construction of such a graph can be implemented in polynomial time by evaluating each of the $\bigO(p^d)$ polynomials in at most $\bigO(p)$ points and then iterating through $\bigO(p^{2d})$ pairs of vertices and checking in time $\bigO(p)$ if they match in the corresponding coordinates.
    The total time complexity of the construction is thus at most
    $$\bigO(dp^{d+1} + dp^{2d})\leq \bigO(p^{3\log_p(N)}) = \bigO(N^3).$$
    
    Finally, we have to argue that $|V(G)| \leq \Tilde{\bigO}(N)$.
    This, however follows from the simple observation that there are at most $p^{d+1}$ distinct polynomials in $\mathbb{F}_p[X]_{\leq d}$ and 
    $$p^{d+1} \leq  p^{\log_p(N) + k + 2} = Np^{k+2}\leq \Tilde{\bigO}(N).$$
\end{proof}

Combining the results from the last two subsections, we obtain the main result of this chapter.
\begin{theorem}
    Let $\Phi$ be a basic property with at least one negative literal. 
    For any $\varepsilon>0$, the Monochromatic Basic Problem corresponding to $\Phi$ cannot be solved faster than $\bigO(n^{k-\varepsilon})$ even on graphs with $\Tilde \bigO(n)$ edges, unless SETH fails.
\end{theorem}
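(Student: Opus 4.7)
The plan is to simply chain together the two reductions established in the last two subsections. Concretely, suppose for contradiction that for some $\varepsilon>0$ there is an algorithm $\mathcal{A}$ solving the monochromatic basic problem corresponding to $\Phi$ in time $\bigO(n^{k-\varepsilon})$ on graphs with $\tilde\bigO(n)$ edges. I would then take any instance $(V_1,\dots,V_k,W)$ of the corresponding polychromatic basic problem with $n$ vertices and apply the first proposition to obtain, in time $\tilde\bigO(n)$, an equivalent bichromatic instance $G'=(X,Y,E)$ with $|V(G')|=\tilde\bigO(n)$ vertices and $\tilde\bigO(n)$ edges. Feeding $G'$ into the second proposition yields, again in $\tilde\bigO(n)$ time, an equivalent monochromatic instance $G''$ still of size $\tilde\bigO(n)$. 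Running $\mathcal{A}$ on $G''$ decides the polychromatic instance in total time $\tilde\bigO(n^{k-\varepsilon})$, contradicting the hardness of the polychromatic basic problem with at least one negative literal under SETH (this is the known lower bound that the reductions of this section are designed to transfer into the monochromatic sparse setting).

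The key observations that make the chain work are: (i) both reductions require $\Phi$ to contain at least one negative literal, which is exactly the hypothesis of the theorem; (ii) both reductions preserve sparsity up to polylogarithmic factors, since the identifier gadget from Lemma \ref{lemma:identifier-gadget} contributes only $\tilde\bigO(1)$ extra vertices per $(k-1)$-subset of colour classes, and the monochromatic gadget from Lemma \ref{lemma:monochromatic-gadget} is applied once with $N=|Y|=\bigO(n)$; (iii) both reductions are correctness-preserving in both directions, as proved in the respective propositions.

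The only subtlety worth spelling out is the bookkeeping of the polylogarithmic overhead: since each reduction blows up the vertex count by a factor of $\mathrm{polylog}(n)$, if one were being pedantic, the resulting monochromatic instance has $n'=\tilde\bigO(n)$ vertices, and one needs that $(n')^{k-\varepsilon}=\tilde\bigO(n^{k-\varepsilon})$, which still violates SETH (absorb the $\mathrm{polylog}$ factor by shrinking $\varepsilon$ slightly if desired). I expect this bookkeeping to be the only potential obstacle, but it is routine once the two propositions of the preceding subsections are in hand; the theorem itself is essentially a one-line consequence of their composition.
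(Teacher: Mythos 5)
Your proposal is correct and matches the paper's own argument, which is exactly the composition of the two preceding propositions: the paper derives sparse bichromatic hardness from the polychromatic-to-bichromatic reduction and then transfers it through the bichromatic-to-monochromatic reduction, which is the same chain you spell out (with the polylog bookkeeping handled as you describe). No gap to report.
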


\section{Algorithms and Hardness of Distance-\texorpdfstring{\boldmath{$r$}}{r} \texorpdfstring{\boldmath{$k$}}{k}-Dominating Set}
\label{sec:distance-r-domination}

Let $r \geq 1$ be a fixed integer.
We consider the \emph{Distance-$r$ $k$-Dominating Set} problem as a natural generalization of $k$-Dominating Set problem, where a vertex $u$ dominates a vertex $v$ if and only if there exists a path from $u$ to $v$ of length at most $r$.
Clearly, if $r = 1$, we obtain the usual $k$-Dominating Set problem.

Formally, Distance-$r$ $k$-Dominating Set problem can be stated as follows.
Given an undirected graph $G=(V,E)$ with $|V|=n$ and $|E|=m$, decide if
\[ \exists v_1\in V \dots \exists v_k \in V: \forall w \in V: \min \{d_G(v_1,w), \cdots, d_G(v_k,w)\} \leq r.\} \]
Let $B_r(v) = \{x\in V(G)\mid d(x,v)\leq r\}$.

In this section, we prove that sparsity does not affect the fine-grained complexity of Distance-$r$ $k$-Dominating Set problem.
In fact, we prove that a lower bound $n^{k\pm o(1)}$ holds already for graphs with $m = \Tilde{\bigO}(n)$ edges, assuming the $k$-OV hypothesis.
Furthermore, we prove that if $\omega = 2$, or $k$ is sufficiently large, we can construct an algorithm that runs in $\tilde{\bigO}(n^{k+o(1)})$, thus matching the lower bound.

\label{sec:domination-distance}
\begin{theorem}
    For any $\varepsilon>0$, there exists no algorithm solving Distance-$r$ $k$-Dominating Set problem in $\bigO(n^{k-\varepsilon})$ for any $k\geq 2, r\geq 2$, even in sparse graphs, unless SETH fails.
\end{theorem}
\begin{proof}
    To prove this lower bound, we reduce from $k$-OV.
    Let $A_1,\dots, A_k$ be an instance of $k$ orthogonal vectors problem.
    In particular, let $A_1,\dots, A_{k}$ denote sets of $d$-dimensional (assume $d=\bigO(\log n)$) binary vectors of size $n$.
    Construct the graph $G$ as follows.
        Start with an empty graph.
        For each set $A_i$, add one vertex for every vector in $A_i$, denoted $a^i_j$ and two additional vertices $A_i$, $A_i'$.
        Add an edge between $A_i$ and $a^i_j$ and between $A'_i$ and $a^i_j$ for every $i,j$.
        Add an edge between $A_i$ and $A_i'$ and subdivide this edge $r-2$ times and label the new vertices by $s_1(A_i,A_i'), \dots, s_{r-2}(A_i,A_i')$.
        Add $2d$ vertices $D_1,\dots, D_d$, $D_1',\dots, D_d'$ and add an edge between $a^i_j$ and $D_t$ if and only if $a^i_j[t] = 0$ for all $i,j,t$. 
         Add an edge between $D_t$ and $D_t'$ for all $t$, and subdivide this edge $r-2$ times. 
        Label the new vertices by $s_1(D_t, D_t'), \dots, s_{r-2}(D_t,D_t')$.
    The construction of such a graph is schematically depicted on Figure \ref{fig:dom-distance}.

    Clearly the constructed graph $G$ has $\Tilde\bigO(n)$ vertices and $\Tilde\bigO(n)$ edges.
    We proceed to show that $G$ contains $k$ vertices $v_1,\dots, v_k$ such that $\bigcup_{i=1}^k B_r(v_i) = V(G)$ if and only if there are vectors $a_1\in A_1,\dots, a_k\in A_k$ such that $\prod_{i=1}^k a_i[j] = 0$ holds for all $1\leq j\leq d$.

    Suppose first that there are vectors $a_1\in A_1,\dots, a_k\in A_k$ such that $\prod_{i=1}^k a_i[j] = 0$ for all $1\leq j\leq d$.
    Then we claim that every vertex in $G$ is at distance at most $r$ from at least one $a_i$. 
    Consider first the vertices $A'_i$ and $A_i$.
    By construction $A_i$ is adjacent to $a_i$ and $A_i'$ is at distance exactly $r$ from $a_i$. 
    Consequently also $s_1(A_i, a_i), \dots, s_{r-1}(A_i, a_i)$ are covered.
    Furthermore, since any vertex $a^i_j$ is adjacent to $A_i$, $d(a_i,a^i_j)\leq 2$ for all $i,j$ and since $r\geq 2$, also vertices $a^i_j$ are covered.  
    Finally, by construction, for any $1\leq t\leq d$ there is an edge from some $a_i$ to $D_t$, therefore since $d(D_t,D'_t) = r-1$, we conclude that also all the vertices along the (unique) path from $D_t$ to $D'_t$ are covered and thus $a_1,\dots, a_k$ cover all the vertices in $G$.

    Conversely assume that there are vertices $v_1,\dots, v_k$ such that $\bigcup_{i=1}^k B_r(v_i) = V(G)$.
    Consider vertices $A'_1,\dots, A_k'$. 
    Clearly $B_r(A_i')$ consists only of those vertices along the paths from $A_i'$ to $a^i_j$ for all $j$.
    In particular, since for all $i\neq j$ $B_r(A_i')$ and $B_r(A_j')$ are pairwise disjoint, the only possible solution is $v_i\in B_r(A_i')$ (w.l.o.g.). 
    Furthermore, for any $1\leq t\leq d$, $B_r(D_t')$ consists only of the vertices along the path from $D_t'$ to some $a^i_j$, such that $a^i_j$ is adjacent to $D_i$.
    Considering also the last constraint, we can conclude that for all $t$, there is a vertex in $\{v_1,\dots, v_k\}$ that is adjacent to $D_t$.
    
    Hence, for every $v_i = a^i_j$ (for some $j$), take $a_i$ to be the vector that corresponds $a^i_j$, and for $v_i$ that lies along the path from $A_i$ to $A_i'$, take $a_i$ that corresponds to $a^i_1$, then we claim that $a_1\in A_1,\dots, a_k\in A_k$ and for every $1\leq t\leq d$ $\prod_{i=1}^k a_i[d] = 0$. 
    By selection of the vectors $a_i$, clearly $a_i\in A_i$ and also, in $G$, if $v_i$ is adjacent to $D_t$, then $v_i = a^i_j$ for some $j$ and furthermore, $a_i[t]=0$.
    But as we argued above, for every $t$ such a vertex $v_i$ exists and so $a_i[t]=0$.
    The desired result follows directly.
\end{proof}
\begin{figure}
    \centering
    \includegraphics[scale=0.9]{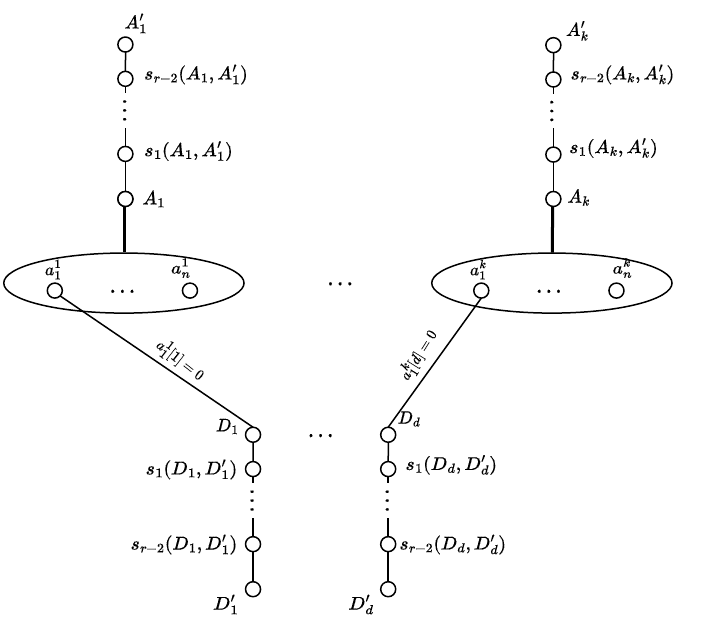}
    \caption{Schematic representation of reduction from $k$-OV to Distance-$r$ $k$-Dominating Set problem. Bold edges represent bicliques between the corresponding sets.}
    \label{fig:dom-distance}
\end{figure}

\begin{proposition}
    Let $G$ be a graph with $n$ vertices and $m$ edges and let $k\geq 2,r\geq 1$ be fixed constants.
    For sufficiently large $k$, or $\omega=2$, we can solve Distance-$r$ $k$-Dominating Set in time $\bigO(n^{k+o(1)})$.
\end{proposition}
\begin{proof}
    Without loss of generality, we may assume that $r>1$, as otherwise we have an instance of $k$-dominating set problem and we already know that this statement holds.
    Let $A$ be the adjacency matrix of $G$. 
    Compute the matrix $(A+I)^r$.
    We can compute this matrix in $\bigO(n^{\omega+o(1)} \log(r)) = \bigO(n^{\omega+o(1)})$ time (by repeated squaring of matrix $A+I$). 
    The entry $i,j$ in the matrix $(A+I)^r$ is non-zero if and only if there is a walk of length at most $r$ between $i$ and $j$ in $G$.
    Observe that every path is a walk, so if there exists a path of length at most $r$ between $i$ and $j$, the entry $(A+I)^r_[i,j]$ will be non-zero.
    Furthermore, if there exists a walk of length at most $r$ between $i$ and $j$, then there exists a path of length at most $r$ between them as well (obtained by removing all the cycles from the walk).
    
    Let $B$ be a matrix obtained by normalizing $(A+I)^r$.
    In particular, $B[i,j] = 1$ if and only if $(A+I)^r[i,j]>0$, otherwise, $B[i,j] = 0$.
    and let $G'$ be a graph with adjacency matrix $B - I_n$.
    Note that $V(G) = V(G')$.
    We claim that $G'$ has dominating set at most $k$ if and only if $G$ has a dominating set of size $k$ at distance $r$.
    
    In particular, assume first that $G$ has a Distance-$r$ $k$-Dominating Set, $S = \{v_1,\dots, v_k\}$.
    Fix a vertex $v\in V(G)$. 
    We want to show that $v$ is dominated by some $v_i$ in $G'$.
    If $v\in S$, we are done, so assume this is not the case.
    Then for some $1\leq i\leq k$, there exists a (non-trivial) path of length $t\leq r$ between $v_i$ and $v$ in $G$.
    Hence, the entry corresponding to $v_i,v$ in $(A+I)^r$ will be non-zero and since $v_i\neq v$, this entry will be equal to $1$ in $B-I_n$ and therefore $v_i$ is adjacent to $v$ in $G'$.
    Since $v$ was an arbitrary vertex, we may conclude that $S$ dominates $G'$.

    Conversely, assume that $S= \{v_1,\dots, v_k\}$ is a $k$-dominating set in $G'$ and fix a vertex $v\in V(G)$.
    Then $v$ is either contained in $S$, or adjacent to a vertex in $S$ in $G$.
    If $v$ was contained in $S$, then there exists a path of length $0<r$ between $v$ and a vertex in $S$ and $v$ is dominated by $S$.
    Otherwise, assume that $v$ is adjacent to a vertex $v_i\in S$. 
    Then there exists a walk of length $t\leq r$ in $G$ and in particular, a path of length at most $t$. 
    Hence, $v$ is dominated by $v_i$ and $S$ dominates $G$ as desired.
    
    Observe that we can compute $B$ in $\bigO(n^{\omega+o(1)})$, and as we proved in section \ref{sec:k-dom-set}, we can check if a graph $G'$ has a dominating set of size $k$ in $\bigO(n^{k+o(1)})$, assuming $k\geq 8$, or $k\geq 2$ and $\omega = 2$.
\end{proof}

\section{Closed Neighbourhood Containment and \texorpdfstring{\boldmath{$(k-1)$}}{(k-1)}-Covering}
In this section we show that if we assume that the edge relation is reflexive (i.e., $E(v,v)$ for all $v\in V$), then the basic problem 
\[\exists \text{pairwise distinct } v_1,\dots, v_k: \forall w \in V: \overline{E(v_1,w)} \vee E(v_2,w)\vee \cdots \vee E(v_k,w)\]
becomes easier on sparse graphs. For $k=2$, we call this problem \emph{Closed Neighborhood Containment} (are there distinct $v_1,v_2\in V$ such that $N[v_1] \subseteq N[v_2]$?). 
For general $k\geq 2$, we call it \emph{Closed Neighborhood $(k-1)$-Covering} (are there pairwise distinct $v_1,\dots, v_k\in V$ such that $N[v_1] \subseteq N[v_2]\cup \cdots \cup N[v_k]$?).
 
\paragraph{Closed Neighborhood Containment.}
To obtain faster algorithms on sparse graphs for $k=2$, we exploit the following simple, yet crucial observation:
\begin{observation}
    Given a graph $G$, if there exists a pair of vertices $x,y$, such that $N[x] \subseteq N[y]$, then $E(x,y)$.
\end{observation}
We can use this fact to reduce the problem to the problem All-Edge Triangle Counting (Given an $m$-edge graph, for each edge $e\in E$, output the number of triangles containing $e$.), which can be solved in time $\bigO(m^{\frac{2\omega}{w+1}})$~\cite{AlonYZ97}.
\begin{lemma}
    Given a graph $G$, there exists a pair of vertices $x,y$, such that $N[x] \subseteq N[y]$ if and only if $E(x,y)$ and the number of triangles containing the edge $\{x,y\}$ is equal to $\deg(x)-1$.
\end{lemma}
\begin{proof}
    Suppose first that $N[x]\subseteq N[y]$.
    Then by the previous lemma $E(x,y)$. 
    Clearly, the number of triangles containing the edge $\{x,y\}$ is at most $\deg(x)-1$, so it is sufficient to show that the lower bound holds as well.
    Assume for contradiction that this number is strictly smaller than $\deg(x)-1$.
    Then there exists $z\in N(x)$ such that $z\not \in N(y)$, contradicting the assumption that $N[x]\subseteq N[y]$.

    Conversely, assume that for some pair of adjacent vertices $x,y$ the number of triangles containing the edge $\{x,y\}$ is equal to $\deg(x)-1$.
    Then clearly every vertex $z\neq y$ that is adjacent to $x$ is contained in this triangle.
    In particular every $z\in N(x)\setminus\{y\}$ is adjacent to $y$ as desired.
\end{proof}

\begin{theorem}
    Closed Neighborhood Containment can be solved in $\bigO(m^{\frac{2\omega}{\omega+1}})$.
\end{theorem}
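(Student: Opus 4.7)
The plan is to reduce the closed neighbourhood containment problem to computing, for every edge, the number of triangles it participates in, and then invoking the known triangle-listing/counting algorithm of Alon--Yuster--Zwick that runs in time $\bigO(m^{\frac{2\omega}{\omega+1}})$.

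More concretely, I would first compute $\deg(v)$ for every vertex in $\bigO(m)$ time. Then I would run a triangle listing algorithm that outputs, for every edge $\{x,y\}\in E(G)$, the number $t(x,y)$ of triangles in $G$ containing that edge. By the proposition immediately preceding the theorem, there exists a pair of vertices $x,y$ with $N[x]\subseteq N[y]$ if and only if there exists an edge $\{x,y\}\in E(G)$ for which either $t(x,y)=\deg(x)-1$ or $t(x,y)=\deg(y)-1$. Once the values $t(x,y)$ are known, this final verification is a single pass over the edge set and costs only $\bigO(m)$ additional time.

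Thus the whole running time is dominated by triangle enumeration, and the key ingredient to cite is the standard result that per-edge triangle counts can be produced in time $\bigO(m^{\frac{2\omega}{\omega+1}})$. The usual argument partitions vertices into \emph{light} ones of degree at most $m^{\frac{\omega-1}{\omega+1}}$ and \emph{heavy} ones above this threshold: triangles containing at least one light vertex are enumerated directly by scanning adjacency lists of its endpoints, whereas triangles inside the heavy subgraph (which has $\bigO(m^{\frac{2}{\omega+1}})$ vertices) are counted via a single Boolean matrix product, so both contributions are $\bigO(m^{\frac{2\omega}{\omega+1}})$. One only needs to verify that this standard procedure yields the per-edge counts (and not merely a global count), which follows because each triangle found by the enumeration can be attributed to each of its three edges in $\bigO(1)$ time.

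The main obstacle, if any, is to argue cleanly that per-edge triangle counts can indeed be obtained within the same asymptotic budget as triangle detection; in the matrix-multiplication step this requires that the Boolean product on the heavy subgraph be replaced by an integer product (still doable in $\bigO(n_{\mathrm{heavy}}^\omega)$ time for constant-sized entries), and in the light-vertex step each triangle must be emitted exactly once so that the attribution to its three edges is correct. After handling these routine bookkeeping issues, combining triangle enumeration with the proposition completes the proof.
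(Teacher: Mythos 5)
Your proof is correct and takes essentially the same route as the paper: use the preceding proposition to reduce closed-neighbourhood containment to per-edge triangle counting, then invoke the $\bigO(m^{2\omega/(\omega+1)})$ heavy--light triangle algorithm (Alon--Yuster--Zwick), checking both $t(x,y)=\deg(x)-1$ and $t(x,y)=\deg(y)-1$ over all edges in an $\bigO(m)$ sweep. The paper's proof is a one-line citation; you have usefully spelled out the routine step of extracting per-edge counts (integer matrix product on the heavy part, attribution during enumeration on the light part) within the same time budget.
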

\begin{proof}	
    By computing, for each $e\in E$, the number of triangles containing $e$ in total time $\bigO(m^{\frac{2\omega}{\omega+1}})$, we can use the last lemma to decide any given instance in the same running time.
\end{proof}

\paragraph{Closed Neighborhood \boldmath{$(k-1)$}-Covering} The above algorithm easily generalizes to $k>2$ by brute-forcing over $k-2$ variables as follows:

Given a graph $G$ with $n$ vertices and $m$ edges, fix distinct $x_1,\dots, x_{k-2}$ arbitrarily.
Construct a tripartite graph $G'$ as follows.
Let $V_1 = V_2 = V(G)\setminus \{x_1,\dots, x_{k-2}\}$ and let $V_3 = V(G)\setminus \big(N[x_1]\cup \dots \cup N[x_{k-2}] \big)$.
Let $V(G') = V_1\cup V_2\cup V_3$.
Finally, add edges in $G'$ as follows. 
If a pair $u\neq v$ is in $E(G)$, add the edge between the vertex corresponding to $u\in V_1$ and the vertex corresponding to $v\in V_2$. 
Add the edges between $V_3$ and $V_1$, and $V_3$ and $V_2$ similarly.
Finally, for any vertex $v$ in $V_3$, add the edge between the copy of $v$ in $V_3$ and the copy of $v$ in $V_1$ and $V_2$. 
For each vertex $x\in V_1\cup V_2$ store the number $\pi(x)$ of neighbours of $x$ in $V_3$. 
Observe first that $G'$ has $\bigO(n)$ vertices and $\bigO(m)$ edges.

\begin{lemma}
	Given a graph $G$, and a set of distinct vertices $x_{1},\dots, x_{k-2}$, there exists a pair of distinct vertices $x_{k-1}, x_k$, such that $N[x_k]\subseteq \bigcup_{i=1}^{k-1}N[x_i]$ and $E(x_{k-1},x_k)$ if and only if in the graph $G'$ as constructed above the number of triangles containing $x_{k-1}\in V_1$ and $x_{k}\in V_2$ is equal to $\pi(x_k)$.
\end{lemma}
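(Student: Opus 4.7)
The plan is to unwrap the construction of $G'$ into a bookkeeping statement about closed neighborhoods in $G$. First I would make explicit when two vertices of $G'$ are adjacent. A vertex $u\in V_1$ and $v\in V_2$ are adjacent in $G'$ exactly when $u\neq v$ and $E(u,v)$ holds in $G$. For $i\in\{1,2\}$, a vertex $u\in V_i$ and a vertex $v\in V_3$ are adjacent in $G'$ exactly when $v\in N[u]$ in $G$: the non-reflexive edges handle the case $u\neq v$, and the extra edges between copies of the same vertex handle the case $u=v$. In particular $\pi(x)=|V_3\cap N[x]|$ for every $x\in V_1\cup V_2$.

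From this dictionary, a triangle with corners $x_{k-1}\in V_1$, $x_k\in V_2$, $w\in V_3$ exists in $G'$ if and only if $E(x_{k-1},x_k)$ in $G$ and $w\in V_3\cap N[x_{k-1}]\cap N[x_k]$. Hence the number of triangles of $G'$ containing $x_{k-1}\in V_1$ and $x_k\in V_2$ is
\[
\tau(x_{k-1},x_k)=\begin{cases} |V_3\cap N[x_{k-1}]\cap N[x_k]| & \text{if } E(x_{k-1},x_k), \\ 0 & \text{otherwise.} \end{cases}
\]
The equation $\tau(x_{k-1},x_k)=\pi(x_k)$ then says precisely that every $w\in V_3\cap N[x_k]$ also lies in $N[x_{k-1}]$.

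For the forward direction I would assume $E(x_{k-1},x_k)$ and $N[x_k]\subseteq \bigcup_{i=1}^{k-1}N[x_i]$. Any $w\in V_3\cap N[x_k]$ lies outside $\bigcup_{i=1}^{k-2}N[x_i]$ by definition of $V_3$, so the hypothesised inclusion forces $w\in N[x_{k-1}]$, yielding $\tau(x_{k-1},x_k)=\pi(x_k)$. For the converse, suppose $\tau(x_{k-1},x_k)=\pi(x_k)$ with $\pi(x_k)>0$; then $\tau>0$ forces the edge $E(x_{k-1},x_k)$, and the inclusion $V_3\cap N[x_k]\subseteq N[x_{k-1}]$ rearranges, via $V_3=V(G)\setminus\bigcup_{i=1}^{k-2}N[x_i]$, into $N[x_k]\subseteq \bigcup_{i=1}^{k-1}N[x_i]$, giving the desired pair.

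The main obstacle is the degenerate case $\pi(x_k)=0$: the triangle equation then holds vacuously, but the left-hand side of the equivalence still requires a genuine edge $E(x_{k-1},x_k)$. I would handle this by a cheap preprocessing pass: if $\pi(x_k)=0$ then $x_k$ is already dominated by $\{x_1,\dots,x_{k-2}\}$, so one only needs to report any neighbour of $x_k$ in $G$ as the witness $x_{k-1}$, which can be located across all candidate $x_k$ in $\bigO(m)$ total time. With this boundary case settled, counting triangles through each $V_1V_2$ edge of $G'$ answers the question for the fixed $x_1,\dots,x_{k-2}$, reducing closed-neighbourhood containment to triangle listing on $G'$.
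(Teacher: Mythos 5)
Your proof is correct and takes essentially the same route as the paper: both arguments reduce the statement to the dictionary that triangles through $x_{k-1}\in V_1$ and $x_k\in V_2$ correspond exactly to vertices of $V_3\cap N[x_{k-1}]\cap N[x_k]$ (using that $G'$ is tripartite and that the copy edges make $V_3$-adjacency mean closed-neighbourhood membership), while $\pi(x_k)=|V_3\cap N[x_k]|$, and then compare the two counts in each direction. The only place you go beyond the paper is the degenerate case $\pi(x_k)=0$, where the triangle equation holds even for a non-adjacent pair so the edge $E(x_{k-1},x_k)$ is not forced; the paper's converse quietly skips re-deriving that edge, so your explicit preprocessing for this corner case is a small but genuine tightening rather than a different approach.
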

\begin{proof}
	Suppose first that there exists a pair of distinct vertices $x_{k-1}, x_k$, such that $N[x_k]\subseteq \bigcup_{i=1}^{k-1}N[x_i]$ and $E(x_{k-1},x_k)$.
	Consider the copy of $x_{k-1}$ in $V_1$ and the copy of $x_k$ in $V_2$. 
	
	Fix any vertex $v$ in $V_3$ and consider two cases.
	\begin{itemize}
		\item $v$ corresponds to the same vertex as $x_{k-1}$, or $x_k$ in $G$.
		\item $v$, $x_{k-1}$, $x_k$ are all distinct vertices in $G$.
	\end{itemize}
	Notice that these are the only possibilities, since the endpoints of the edges going between $V_1$ and $V_2$ by construction correspond to distinct vertices in $G$.
	
	If $v \in \{x_{k-1}, x_k\}$, since $x_{k-1}$ and $x_k$ are distinct vertices, adjacent in $G$, in $G'$ the vertices $\{x_{k-1}\in V_1, x_k\in V_2, v\in V_3\}$ form a triangle.
	
	Assume now that $v$, $x_{k-1}$, $x_k$ are all distinct vertices in $G$.
	Since in $G$, $v$ is not in $\bigcup_{i=1}^{k-2}N[x_i]$, it is either in $N(x_{k-1})$, or is not in $N(x_k)$. 
	But by construction of $G'$ this property is preserved and hence if $v\in N(x_{k})$, then $v\in N(x_{k-1})$, or in other words for any vertex $v\in V_3$ that is adjacent to $x_k$, it holds that $v, x_{k-1}, x_k$ form a triangle, giving us the lower bound.
	The upper bound is trivial from the fact that $G'$ is tripartite.
	
	Conversely, assume that for a pair of vertices $x_{k-1}\in V_1$ and $x_k\in V_2$, the number of triangles in $G'$ containing these two vertices is equal to $\pi(x_k)$.
	Let $v$ be an arbitrary vertex in $G$.
	If $v$ is not in $V_3$ in $G'$, then it is contained in $\bigcup_{i=1}^{k-2} N[x_i]$ in $G$.
	On the other hand, if it is contained in $V_3$, in $G'$ it is either adjacent to $x_{k-1}\in V_1$, or nonadjacent to $x_k\in V_2$.
	In $G$, this means that $v$ is either equal to $x_k$ or $x_{k-1}$, or adjacent to $x_{k-1}$, or nonadjacent to $x_k$.
	In other words, if $v\in V_3$ in $G'$, then in $G$ it is either contained in $N[x_{k-1}]$, or not contained in $N[x_{k}]$, as desired.
\end{proof}

We can observe that the last lemma gives us an algorithm running in time $\bigO(n^{k-2}m^{\frac{2\omega}{\omega+1}})$, beating the $\Omega(n^k)$ running time for sufficiently sparse graphs.
\begin{theorem}
    Closed Neighborhood $(k-1)$-Covering can be solved in time $\bigO(n^{k-2}m^{\frac{2\omega}{\omega+1}})$.
\end{theorem}

\paragraph{Hardness of All Remaining Reflexive Properties}
We now argue that for every monochromatic basic problem 
\[\exists x_1\in V,\dots, \exists x_{\ell} \in V, \exists x_{\ell+1} \in V, \dots, \exists x_{k}\in V \forall v\in V \; \Big(\bigvee_{i=1}^{\ell}E(x_i,v)\Big)\lor\Big(\bigvee_{i=\ell+1}^{k}\overline{E(x_i,v)}\Big),\]
the reduction given in Section~\ref{sec:monochromatic-props} can be adapted for the reflexive case if the number of negative literals is $\ell'\geq 2$). 
Note that for $\ell' = 0$ ($k$-Dominating Set) and $\ell'=1$ (Closed Neighborhood $(k-1)$-Covering), we have shown faster algorithms in sparse graphs, so $\ell'\geq 2$ is indeed the only case remaining.
 
For $\ell'\geq 2$ we construct the reduction from a sparse Bichromatic Basic Problem to the corresponding Monochromatic Basic Problem similarly as in Section~\ref{sec:monochromatic-props}.
\begin{lemma}
    Let $\Phi$ be a basic property with $k$ variables and at least two negative literals.
    Given an instance $G = (X, Y, E)$ of the Bichromatic Basic Problem corresponding to $\Phi$ with $n$ vertices and assume $Y=\Tilde{\bigO}(1)$, we can construct an equivalent instance $G' = (V,E)$ of the corresponding Monochromatic Basic Problem in reflexive setting in time $\Tilde\bigO(n)$ consisting of at most $\Tilde{\bigO}(n)$ edges.
\end{lemma}
\begin{proof}
    Consider the same construction of the graph $G'$ as described in Section~\ref{sec:monochromatic-props}.
    If $G$ is a no-instance, one can observe that the identical arguments show that $G'$ is also a no-instance.
    
    On the other hand, if $G$ is a yes-instance, then we can find $x_1,\dots, x_k\in X$ such that for every $y\in Y$, the formula $E(x_1,y) \lor \dots \lor E(x_{\ell},y)\lor \overline{E(x_{\ell + 1},y)}\lor \dots \lor \overline{E(x_{k},y)}$ is satisfied.
    We claim that also $\forall v\in V(G')$, it holds that $ E(x_1,v) \lor \dots \lor E(x_{\ell},v)\lor \overline{E(x_{\ell + 1},v)}\lor \dots \lor \overline{E(x_{k},v)}$. 
    If $v\in V(H)$, then again the identical arguments as before show that $v$ is satisfied.
    Thus, we may assume that $v\in X$.
    For every $v\in X\setminus\{x_{\ell+1}\}$, we observe that $v$ is non-adjacent to $x_{\ell+1}$ and hence satisfied.
    However, since we are in the reflexive case, $x_{\ell+1}$ is adjacent to itself, and since there are no (non-trivial) edges inside $X$, $x_{\ell+1}$ has to be satisfied by another negative vertex (this is a subtle, but crucial reason why the reduction fails in the reflexive setting for $\ell'\leq 1$).
    Since $\ell'>1$, notice that $x_{\ell+1}$ is satisfied by $x_{\ell+2}$. 
\end{proof}
We conclude this section by stating the consequential hardness result.
\begin{theorem}
    Let $\Phi$ be a basic property with at least two negative literals. 
    For any $\varepsilon>0$, the Monochromatic Basic Problem in the reflexive setting corresponding to $\Phi$ cannot be solved faster than $\bigO(n^{k-\varepsilon})$, even on graphs with $\Tilde \bigO(n)$ edges, unless SETH fails.
\end{theorem}

\bibliographystyle{plainurl}
\bibliography{refs}

\end{document}